\newcommand{\yfc}[1]{\textcolor{brown}{\ifmmode \text{[YFC: #1]}\else [YFC: #1] \fi}}
\newcommand{\ol}[1]{\textcolor{blue}{\ifmmode \text{[OL: #1]}\else [OL: #1] \fi}}
\newcommand{\vh}[1]{\textcolor{orange}{\ifmmode \text{[VH: #1]}\else [VH: #1] \fi}}
\newcommand{\claimqed}[0]{\hfill $\blacksquare$}
\newcommand{\langsymb}[0]{\cL}
\newcommand{\langof}[1]{\langsymb(#1)}
\newcommand{\langautof}[2]{\langsymb_{#1}(#2)}
\newcommand{\langaut}[1]{\langautof{\A}{#1}}
\newcommand{\limplies}[0]{\Rightarrow}
\newcommand{\iffdef}[0]{\mathrel{\stackrel{\mathrm{def}}{\iff}}}
\newcommand{\bigO}[0]{\mathcal{O}}
\newcommand{\bigOof}[1]{\bigO(#1)}
\newcommand{\clxof}[1]{[#1]_x} 
\newcommand{\states}{Q}
\newcommand{\trans}{\delta}
\newcommand{\inits}{I}
\newcommand{\acc}{F}
\newcommand{\ltr}[1]{\mathop{\xrightarrow{#1}}}
\newcommand{\pspace}[0]{\textsc{PSpace}}
\newcommand{\costates}[1][\Comp]{\states^O}
\newcommand{\cotrans}[1][\Comp]{\trans^O}
\newcommand{\coinits}[1][\Comp]{\inits^O}
\newcommand{\coacc}[1][\Comp]{\acc^O}
\newcommand{\A}{\mathcal{A}}
\newcommand{\B}{\mathcal{B}}
\newcommand{\cL}{\mathcal{L}}
\newcommand{\cT}{\mathcal{T}}
\newcommand{\rankleq}[0]{\mathrel{\leq^{S}_{a}}}
\newcommand{\subseteqlang}[0]{\mathrel{\subseteq_\langsymb}}
\newcommand{\ftransover}[1]{\overset{#1}{\underset{F}{\leadsto}}}
\newcommand{\transover}[1]{\overset{#1}{\leadsto}}
\newcommand{\cond}[1]{\mathcal{C}^{#1}}
\newcommand{\condof}[2]{\cond{#1}(#2)}
\newcommand{\condtrace}[1]{\condof{#1}{\pi_p, \pi_r}}
\newcommand{\seqof}[2]{#1_{#2}}
\newcommand{\word}[0]{\alpha}
\newcommand{\wordof}[1]{\seqof{\word}{#1}}
\newcommand{\dagg}[0]{\mathcal{G}}
\newcommand{\dagof}[1]{\dagg_{#1}}
\newcommand{\dagw}[0]{\dagof{\word}}
\newcommand{\dagwiof}[1]{\dagw^{#1}}
\newcommand{\simby}[0]{\preceq}
\newcommand{\dirsimby}[0]{\mathrel{\simby_{\mathit{di}}}}
\newcommand{\delsimby}[0]{\mathrel{\simby_{\mathit{de}}}}
\newcommand{\fairsimby}[0]{\mathrel{\simby_{f}}}
\newcommand{\xsimby}[0]{\mathrel{\simby_x}}
\newcommand{\quotxsimby}[0]{\mathrel{\simby^{\similar}_{x}}}
\newcommand{\similar}[0]{\mathrel{\approx}}
\newcommand{\similarof}[1]{\mathrel{\similar_{#1}}}
\newcommand{\xsimilar}[0]{\similarof{x}}
\newcommand{\fairsimilar}[0]{\similarof{f}}
\newcommand{\rank}[0]{\mathit{rank}}
\newcommand{\rankof}[1]{\rank(#1)}
\newcommand{\rankwof}[1]{\rank_{\word}(#1)}
\newcommand{\BSven}[0]{\B_{\mathit{S}}}
\newcommand{\BSvenRankdir}[0]{\BSven^{\mathit{di}}}
\newcommand{\BSvenRankdel}[0]{\BSven^{\mathit{de}}}
\newcommand{\BSvenRankdirdel}[0]{\BSven^{\mathit{di+de}}}
\newcommand{\Comp}[0]{\textsc{Comp}}
\newcommand{\BSvenSat}[0]{\B_{\mathit{Sat}}}
\newcommand{\BSvenSatX}[0]{\B^x_{\mathit{Sat}}}
\newcommand{\BSvenSatde}[0]{\B^{\mathit{de}}_{\mathit{Sat}}}
\newcommand{\BSvenSatdi}[0]{\B^{\mathit{di}}_{\mathit{Sat}}}
\newcommand{\BSvenSatdide}[0]{\B^{\mathit{di+de}}_{\mathit{Sat}}}
\newcommand{\CompSchewe}[0]{\Comp$_{\text{S}}$\xspace}
\newcommand{\PurgeDi}[0]{\textsc{Purge}$_{\mathit{di}}$\xspace}
\newcommand{\PurgeDe}[0]{\textsc{Purge}$_{\mathit{de}}$\xspace}
\newcommand{\PurgeDiDe}[0]{\textsc{Purge}$_{\mathit{di+de}}$\xspace}
\newcommand{\PurgeX}[0]{\textsc{Purge}$_x$\xspace}
\newcommand{\Saturate}[0]{\textsc{Saturate}\xspace}
\newcommand{\Qtwodi}[0]{Q_2^{\mathit{di}}}
\newcommand{\Qtwode}[0]{Q_2^{\mathit{de}}}
\newcommand{\Qtwodide}[0]{Q_2^{\mathit{di+de}}}
\newcommand{\deltasatup}[0]{\delta^{\mathit{Sat}}}
\newcommand{\deltasatdown}[0]{\delta_{\mathit{Sat}}}
\newcommand{\numsetof}[1]{[#1]}
\newcommand{\rond}[1]{\lceil\!\!\lceil #1 \rceil\!\!\rceil}
\newcommand{\torem}[0]{\mathcal{P}}
\newcommand{\toremdir}[0]{\torem_{\mathit{di}}}
\newcommand{\toremdirof}[1]{\toremdir(#1)}
\newcommand{\toremdel}[0]{\torem_{\mathit{de}}}
\newcommand{\toremdelof}[1]{\toremdel(#1)}
\newcommand{\strateg}[0]{\sigma}
\newcommand{\strategx}[0]{\strateg_x}
\newcommand{\strategdel}[0]{\strateg_{\mathit{de}}}
\newcommand{\secref}[1]{Section~\ref{#1}}
\newcommand{\str}[1]{\mathit{cl}[#1]}
\newenvironment{claimproofnoqed}[1]{\par\noindent\underline{Proof:}\space#1}{}
\newcommand{\blinded}[1]{\ifx\blindreview\undefined #1 \else \textcolor{black!65}{[blinded for review]}\fi}
\title{Simulations in Rank-Based\\ B\"{u}chi Automata Complementation\\ (Technical Report)}
\author{\blinded{Yu-Fang Chen\inst{1} \and
  Vojt\v{e}ch Havlena\inst{2} \and
  Ond\v{r}ej Leng\'{a}l\inst{2}}
  }
\institute{\blinded{
  Academia Sinica, Taiwan \and
  FIT,
  IT4I Centre of Excellence,
  Brno University of Technology,
  Czech Republic}
}
\begin{document}

\maketitle

\begin{abstract}
  \vspace{-2mm}
  Complementation of B\"{u}chi automata is an essential technique used in
  some approaches for termination analysis of programs.
  The long search for an optimal complementation construction climaxed with the
  work of Schewe, who proposed a worst-case optimal rank-based procedure that
  generates complements of a size matching the theoretical lower bound
  of~$(0.76n)^n$, modulo a~polynomial factor of~$\bigOof{n^2}$.
  Although worst-case optimal, the procedure in many cases produces automata
  that are unnecessarily large.
  In this paper, we propose several ways of how to use the direct and delayed
  simulation relations to reduce the size of the automaton obtained in the
  rank-based complementation procedure.
  Our techniques are based on either (i)~ignoring macrostates that cannot be
  used for accepting a~word in the complement or (ii)~saturating macrostates
  with simulation-smaller states, in order to decrease their total number.
  We experimentally showed that our techniques can indeed considerably decrease
  the size of the output of the complementation.
\end{abstract}

\vspace{-8.0mm}
\section{Introduction}\label{sec:intro}
\vspace{-0.0mm}

B\"uchi automata (BA) complementation is a fundamental problem in program
analysis and formal verification, from both theoretical and practical angles.
It is, for instance, a~critical step in some approaches for termination
analysis, which is an essential part of establishing total correctness of
programs~\cite{fogarty2009buchi,heizmann2014termination,ChenHLLTTZ18}.
Moreover, BA complementation is used as a~component of decision
procedures of some logics for reasoning about programs, such as S1S
capturing a~decidable fragment of
second-order arithmetic~\cite{buchi1962decision} or the temporal logics ETL and
QPTL~\cite{SistlaVW85}.

%

The study of the BA complementation problem can be traced back to 1962, when
B\"{u}chi introduced his automaton model in the seminal
paper~\cite{buchi1962decision} in the context of a decision procedure for the
S1S fragment of second-order arithmetic.
In the paper, a~doubly exponential complementation algorithm based on the infinite
Ramsey theorem is proposed.
In 1988, Safra~\cite{safra1988complexity} introduced a~complementation procedure
with an~$n^{\bigOof{n}}$ upper bound and, in the same year,
Michel~\cite{michel1988complementation} established an~$n!$ lower
bound.
From the traditional theoretical point of view, the problem was already solved,
since exponents in the two bounds matched under the $\bigO$~notation (recall that
$n!$ is approximately $(n / e)^n$).
From a~more practical point of view, a~linear factor in an~exponent has
a~significant impact on real-world applications.
It was established that the upper bound of Safra's construction is $2^{2n}$, so
the hunt for an optimal algorithm continued~\cite{vardi2007buchi}.
A~series of research efforts participated in narrowing the
gap~\cite{KupfermanV01,FriedgutKV06,vardi2008automata,kahler2008complementation,yan2006lower}.
The long journey climaxed with the result of Schewe~\cite{Schewe09}, who
proposed an optimal rank-based procedure that generates complements
of a~size matching the theoretical lower bound of~$(0.76n)^n$ found by
Yan~\cite{yan2006lower}, modulo a~polynomial factor of~$\bigOof{n^2}$.

Although the algorithm of Schewe is worst-case optimal, it often generates
unnecessarily large complements. The standard approach to alleviate this problem
is to decrease the size of the input BA before the complementation starts. Since
minimization of (nondeterministic) BAs is a~$\pspace$-complete problem, more
lightweight reduction methods are necessary. The most prevalent approaches are
those based on various notions of \emph{simulation-based reduction}, such as
reductions based on \emph{direct
simulation}~\cite{bustan2003simulation,somenzi2000efficient}, a~richer
\emph{delayed simulation}~\cite{EtessamiWS05}, or their \emph{multi-pebble}
variants~\cite{etessami2002hierarchy}. These approaches first compute
a~simulation relation over the input BA---which can be done with the time
complexity
$\bigOof{mn}$~\cite{HenzingerHK95,IlieNY04,RanzatoT07,RanzatoT10,Cece17} and
$\bigOof{mn^3}$~\cite{EtessamiWS05} for direct and delayed simulation
respectively, with the number of states~$n$ and transitions~$m$---and then
construct a~\emph{quotient} BA by merging simulation-equivalent states, while
preserving the language of the input BA.
The other approach is a reduction
based on \emph{fair simulation}~\cite{gurumurthy-fair-mini}. The fair simulation
cannot, however, be used for quotienting, but still it can be used for merging
certain states and removing transitions.
The reduced BA is used as the input of
the complementation, which often significantly reduces the size of the result.

In this paper, we propose several ways of how to exploit the direct and delayed
simulations in BA complementation even further to obtain smaller complements
and shorter running times.
We focus, in particular, on the optimal \emph{rank-based} complementation
procedure of Schewe~\cite{Schewe09}.
Essentially, the rank-based construction is an extension of traditional subset
construction for determinizing finite automata, with some additional information
kept in each macrostate (a state in the complemented BA) to track the
acceptance condition of all runs of the input automaton on a~given word.
In particular, it stores the \emph{rank} of each state in a~macrostate, which,
informally, measures the distance to the last accepting state on the
corresponding run in the input~BA.
The main contributions of this paper are the following optimisations of
rank-based complementation for BAs, for an input BA~$\A$ and the
output of the rank-based complementation algorithm~$\B$.
\begin{enumerate}
  \item  \emph{Purging}: We use simulation relations over~$\A$ to remove some
    useless macro\-states during the construction of~$\B$.
    In particular, if a~state~$p$ is simulated by~$q$ in~$\A$, this puts
    a~restriction on the relation between the ranks of runs from~$p$ and
    from~$q$.
    As a~consequence, macrostates that assign ranks violating this
    restriction can be purged from~$\B$.

  \item  \emph{Saturation}: We saturate macrostates with states that are
    simulated by the macrostate; this can reduce the total number of states
    of~$\B$ because two or more macrostates can be mapped to a~single saturated
    macrostate.
    This is inspired by the technique of Glabbeek and Ploeger that uses
    \emph{closures} in finite automata determinization~\cite{GlabbeekP08}.
\end{enumerate}

\enlargethispage{3mm}

\noindent
The proposed optimizations are orthogonal to simulation-based size reduction
mentioned above.
Since the quotienting methods are based on taking only the symmetric fragment of
the simulation, i.e., they merge states that simulate \emph{each other}, after
the quotienting, there might still be many pairs where the simulation holds in
only one way, and can therefore be exploited by our techniques.
Since the considered notions of simulation-based quotienting preserve the
respective simulations, our techniques can be used to optimize the
complementation \emph{at no additional cost}.
Our experimental evaluation of the optimizations showed that in many cases, they
indeed significantly reduce the size of the complemented BA.

\vspace{-0.0mm}
\section{Preliminaries}\label{sec:prelims}
\vspace{-2.0mm}


We fix a~finite nonempty alphabet~$\Sigma$ and the first infinite ordinal
$\omega = \{0, 1, \ldots\}$.
For $n\in\omega$, by $\numsetof{n}$ we denote the
set $\{ 0, \dots, n \}$.
An (infinite) word~$\word$ is represented as a~function
$\word:\omega \to \Sigma$ where the $i$-th symbol is denoted as $\wordof i$.
A~finite word~$w$ of length $n+1$ is represented as a~function $w: \numsetof{n}
\to \Sigma$.
The finite word of length~$0$ is denoted as~$\epsilon$.
We~abuse
notation and sometimes also represent~$\word$ as an~infinite sequence $\word =
\wordof 0 \wordof 1 \dots$ and $w$ as a finite sequence $w = w_0\dots w_{n-1}$.
The suffix $\wordof i \wordof{i+1} \ldots$ of~$\word$ is denoted
by~$\wordof {i:\omega}$.
We~use~$\Sigma^\omega$ to denote the set of all infinite words over~$\Sigma$
and~$\Sigma^*$ to denote the set of all finite words.
For $L\subseteq \Sigma^*$ we define $L^* = \{
u\in\Sigma^*~|~u=w_1\cdots w_n \wedge \forall 1 \leq i \leq n: w_i\in L \}$ and
$L^\omega = \{ \word\in\Sigma^\omega~|~\alpha=w_1w_2\cdots \wedge \forall i \geq
1: w_i\in L \}$ (note that $\{\epsilon\}^\omega = \emptyset$).
Given $L_1, L_2 \subseteq \Sigma^*$, we use $L_1 L_2$ to denote the set
$\{w_1 w_2 \mid w_1 \in L_1, w_2 \in L_2\}$.

A~(nondeterministic) \emph{B\"{u}chi automaton} (BA) over~$\Sigma$
is a~quadruple $\A = (Q, \delta, I, F)$ where $Q$ is a~finite set of
\emph{states}, $\delta$ is a~\emph{transition function} $\delta : Q \times
\Sigma \to 2^Q$, and $I, F \subseteq Q$ are the sets of \emph{initial} and
\emph{accepting} states respectively.
We sometimes treat~$\delta$ as a~set of transitions $p \ltr a q$, for instance,
we use $p \ltr a q \in \delta$ to denote that $q \in \delta(p, a)$.
Moreover, we extend $\delta$ to sets of states $P
\subseteq Q$ as $\delta(P, a) = \bigcup_{p \in P} \delta(p,a)$. A~\emph{run}
of~$\A$ from~$q \in Q$ on an input word $\alpha$ is an infinite sequence $\rho:
\omega \to Q$ that starts in~$q$ and respects~$\delta$, i.e., $\rho_0 = q$ and
$\forall i \geq 0: \rho_i \ltr{\wordof i}\rho_{i+1} \in \delta$.
We say that~$\rho$
is accepting iff it contains infinitely many occurrences of some accepting
state, i.e., $\exists q_f \in F: |\{i \in \omega \mid \rho_i = q_f\}| = \omega$.
A~word $\word$ is accepted by~$\A$ from a~state~$q \in Q$ if there is an
accepting run $\rho$ of $\A$ from~$q$, i.e., $\rho_0 = q$. The set
$\langautof{\A} q = \{\word \in \Sigma^\omega \mid \A \text{ accepts } \word
\text{ from } q\}$ is called the \emph{language} of~$q$ (in~$\A$). Given a~set
of states~$R \subseteq Q$, we define the language of~$R$ as $\langautof \A R =
\bigcup_{q \in R} \langautof \A q$ and the language of~$\A$ as~$\langof \A =
\langautof \A I$. For a~pair of states~$p$ and~$q$ in~$\A$, we use $p
\subseteqlang q$ to denote $\langaut p \subseteq \langaut q$.

Without loss of generality, in this paper, we assume~$\A$ to be complete, i.e., for every state~$q$ and
symbol~$a$, it holds that $\delta(q, a) \neq \emptyset$.
A~\emph{trace} over a~word~$\word$ is an infinite sequence $\pi = q_0
\ltr{\wordof 0} q_1 \ltr{\wordof 1} \cdots$ such that $\rho = q_0 q_1 \ldots$ is
a~run of~$\A$ over~$\word$ from~$q_0$.
We say~$\pi$ is \emph{fair} if it contains infinitely many accepting states.
Moreover, we use $p \transover{w} q$ for $w \in \Sigma^*$ to denote that~$q$ is
reachable from~$p$ over the word~$w$; if a~path from~$p$ to~$q$ over~$w$ contains an
accepting state, we can write $p \ftransover{w} q$.
In this paper, we fix a~complete BA $\A = (Q, \delta, I, F)$.

\vspace{-3.0mm}
\subsection{Simulations}\label{sec:sim}
\vspace{-1.0mm}

We introduce simulation relations between states of a~BA $\A$ using the game
semantics in a~similar manner as in the extensive study of Clemente and
Mayr~\cite{MayrC13}.
In particular, in a~\emph{simulation game} between two players (called Spoiler
and Duplicator) in~$\A$ from a pair of states $(p_0, r_0)$, for any (infinite) trace
over a~word~$\word$ that Spoiler takes starting from~$p_0$, Duplicator tries to
mimic the trace starting from~$r_0$.
On the other hand, Spoiler tries to find a~trace that Duplicator cannot mimic.
The game starts in the configuration $(p_0, r_0)$ and every $i$-th round
proceeds by, first, Spoiler choosing a~transition $p_i \ltr{\wordof i} p_{i+1}$ and,
second, Duplicator mimicking Spoiler by choosing a~matching transition $r_i
\ltr{\wordof i} r_{i+1}$ over the same symbol~$\wordof i$.
The next game configuration is $(p_{i+1}, r_{i+1})$.
Suppose that $\pi_p = p_0 \ltr{\wordof 0} p_1 \ltr{\wordof 1} \cdots$ and $\pi_r =
r_0 \ltr{\wordof 0} r_1 \ltr{\wordof 1} \cdots$ are the two (infinite) traces
constructed during the game.
Duplicator \emph{wins} the simulation game if $\condtrace{x}$ holds,
where $\condtrace{x}$ is a condition that depends on the particular
simulation.
In the current paper, we consider the following simulation relations:
%
%
%
%
%
\vspace{-1.5mm}
\begin{itemize}
  \item  \textbf{direct}~\cite{DillHW92}:
    \hspace*{6.3mm}
    $\condtrace{\mathit{di}} \iffdef \forall i: p_i \in \acc \limplies
    r_i \in \acc,$
  \item  \textbf{delayed}~\cite{EtessamiWS05}:
    \hspace*{3mm}
    $\condtrace{\mathit{de}} \iffdef  \forall i: p_i \in \acc \limplies
    \exists k \geq i: r_k \in \acc,$ and
  \item  \textbf{fair}~\cite{HenzingerKR02}:
    \hspace*{11.4mm}
    $\condtrace{f} \iffdef$ if $\pi_p$ is fair, then $\pi_r$ is fair.
\vspace{-1.5mm}
\end{itemize}

A~maximal $x$-simulation relation ${\xsimby} \subseteq \states \times \states$, for $x
\in \{\mathit{di}, \mathit{de}, f \}$, is defined such that
$p \simby_x r$ iff Duplicator has a winning strategy in the simulation game
with the winning condition $\cond{x}$ starting from $(p,r)$.
Formally, we define a~strategy to be a~(total) mapping $\strateg: \states \times
(Q \times \Sigma \times Q) \to Q$ such that
$\strateg(r, p \ltr{a} p') \in \trans(r, a)$, i.e.,
if Duplicator is in state~$r$ and Spoiler selects a transition $p \ltr{a} p'$,
the strategy picks a~state~$r'$ such that $r \ltr a r' \in \delta$ (and
because~$\A$ is complete, such a~transition always exists).
Note that Duplicator cannot look ahead at Spoiler's future moves.
We use $\strategx$ to denote any winning strategy of Duplicator in
the $\cond{x}$~simulation game.
Let~$\strategx$ and~$\strategx'$ be a~pair of winning strategies in the
$\cond x$ simulation game.
We say that $\strategx$ is \emph{dominated} by~$\strategx'$ if for all
states~$p$ and all transitions~$q \ltr a q'$ it holds that $\strategx(p, q \ltr a
q') \xsimby \strategx'(p, q \ltr a q')$,
and that~$\strategx$ is \emph{strictly dominated} by~$\strategx'$ if~$\strategx$
is dominated by~$\strategx'$ and~$\strategx$ does not dominate~$\strategx'$.
A~strategy is~\emph{dominating} if it is not strictly dominated by any other
strategy.
Strategies are also lifted to traces as follows: let $\pi_p$ be as above, then
$\strateg(r_0, \pi_p) = r_0 \ltr{\wordof 0} r_1 \ltr{\wordof 1} \cdots$ where for
all $i \leq 0$ it holds that $\strateg(r_i, p_i \ltr{\wordof i} p_{i+1}) =
r_{i+1}$.
The considered simulation relations form the following hierarchy:
${} \dirsimby {}\subseteq {}\delsimby {}\subseteq {}\fairsimby {}\subseteq
{}\subseteqlang {}$.
Note that every maximal simulation relation is a~preorder, i.e., reflexive and
transitive.





\enlargethispage{3mm}

\vspace{-3.0mm}
\subsection{Run DAGs}\label{sec:rundag}
\vspace{-2.0mm}
In this section, we recall the terminology from~\cite{Schewe09} (which is
a~minor modification of the terminology from~\cite{KupfermanV01}).
We fix the definition of the \emph{run DAG} of~$\A$ over
a~word~$\word$ to be a~DAG (directed acyclic graph) $\dagw = (V,E)$ of
vertices~$V$ and edges~$E$ where
\begin{itemize}
\vspace{-1.5mm}
  \item  $V \subseteq Q \times \omega$ s.t. $(q, i) \in V$ iff there is
    a~run~$\rho$ of $\A$ over $\alpha$ with $\rho_i = q$,
  \item  $E \subseteq V \times V$ s.t.~$((q, i), (q',i')) \in E$ iff $i' = i+1$
    and $q' \in \delta(q, \wordof i)$.
\vspace{-1.5mm}
\end{itemize}

\noindent
Given $\dagw$ as above, we will write $(p, i) \in \dagw$ to denote that $(p, i) \in V$.
We call $(p,i)$ \emph{accepting} if $p$ is an accepting state.
$\dagw$ is \emph{rejecting} if it contains no path with infinitely many
accepting vertices.
A~vertex~$(p, i) \in \dagw$ is \emph{finite} if the set of vertices reachable
from $(p, i)$ is finite, \emph{infinite} if it is not finite, and
\emph{endangered} if $(p,i)$ cannot reach an accepting vertex.

We assign ranks to vertices of run DAGs as follows:
Let $\dagwiof 0 = \dagw$ and~$j = 0$.
Repeat the following steps until the~fixpoint or for at most $2n + 1$ steps,
where $n$~is the number of states of~$\A$.
\begin{itemize}
\vspace{-1.5mm}
  \item  Set $\rankwof{p,i} := j$ for all finite vertices $(p,i)$ of~$\dagwiof j$
    and let $\dagwiof{j+1}$ be $\dagwiof{j}$ minus the vertices with the
    rank~$j$.
  \item  Set $\rankwof{p,i} := j+1$ for all endangered vertices $(p,i)$
    of~$\dagwiof {j+1}$ and let $\dagwiof{j+2}$ be $\dagwiof{j+1}$ minus the
    vertices with the rank~$j+1$.
  \item  Set $j := j + 2$.
\vspace{-1.5mm}
\end{itemize}
%

\noindent
For all vertices~$v$ that have not been assigned a~rank yet, we assign
$\rankwof{v} := \omega$.
(Note that since $\A$ is complete, then $\dagwiof 1 = \dagwiof 0$.)

\vspace{-1mm}
\begin{lemma}\label{lem:rankings}
If $\word \notin \langof \A$, then $0 \leq \rankwof v \leq 2n$ for all $v \in \dagw$.
Moreover, if $\word \in \langof  \A$, then there is a vertex $(p, 0) \in \dagw$
s.t.~$\rankwof{p,0} = \omega$.
\end{lemma}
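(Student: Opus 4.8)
The plan is to analyze the two claims separately, tracking what the ranking procedure does on the run DAG $\dagw$ depending on whether $\word$ is in the language.

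For the first claim, suppose $\word \notin \langof \A$. Then $\dagw$ is rejecting: every path in it has only finitely many accepting vertices. The key observation is that in a rejecting DAG, every infinite vertex becomes endangered after repeatedly removing the finite vertices. More precisely, I would argue by induction on $j$ that the procedure makes progress: at stage $j$ (even), if $\dagwiof j$ is nonempty, it has a finite vertex, so some vertex gets rank $j$; and after removing rank-$j$ vertices, $\dagwiof{j+1}$ either is empty or contains an endangered vertex (otherwise every remaining vertex could reach an accepting vertex, and since the DAG is infinite and finitely branching one could build a path through infinitely many accepting vertices using König's lemma, contradicting rejection). Each of these two removal steps strictly shrinks the "width" of the surviving DAG — the standard argument from~\cite{KupfermanV01,Schewe09} shows that the number of states of~$\A$ appearing infinitely often in $\dagwiof{j+2}$ is strictly smaller than in $\dagwiof j$. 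Since $\A$ has $n$ states, after at most $n$ such double-steps (i.e., $j$ reaching $2n$) the DAG is exhausted, so every vertex receives a rank in $\{0, 1, \ldots, 2n\}$, and in particular no vertex gets rank~$\omega$.

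For the second claim, suppose $\word \in \langof \A$. Then there is an accepting run $\rho$ of~$\A$ on~$\word$ starting from some initial state $p = \rho_0$, so $\rho$ visits some accepting state infinitely often. Consider the vertex $(p,0) \in \dagw$. I claim $\rankwof{p,0} = \omega$, i.e., $(p,0)$ is never removed by the procedure. It suffices to show $(p,0)$ is never finite and never endangered at any stage $\dagwiof j$. For this I would exhibit, at each stage, an infinite path from $(p,0)$ visiting an accepting vertex: namely, the tail of (the DAG-image of) $\rho$. The subtlety is that vertices along $\rho$ might themselves have been removed at earlier stages; so I would argue by induction on $j$ that all vertices $(\rho_i, i)$ survive in $\dagwiof j$. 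Indeed, if $(\rho_i,i) \in \dagwiof j$ for all $i$, then each $(\rho_i,i)$ is infinite (the whole tail of $\rho$ is reachable) hence not removed in the finite-removal step, and each $(\rho_i,i)$ can reach an accepting vertex (some $(\rho_k,k)$ with $\rho_k \in F$, $k \geq i$, which exists since $\rho$ is accepting) hence not removed in the endangered-removal step. This preserves the invariant, so $(\rho_i,i)$ — in particular $(p,0)$ — is never assigned a finite rank, and therefore gets rank~$\omega$.

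The main obstacle is the width-decrease argument underlying the first claim: proving that each finite-removal step followed by each endangered-removal step strictly reduces the set of states occurring infinitely often in the surviving DAG, so that the process terminates within $2n+1$ iterations. This is the technical heart (it is exactly the bound that makes the ranks lie in $[2n]$), and it relies on finite branching (completeness of~$\A$ gives us this, together with König's lemma) plus a careful case analysis of why, in a rejecting infinite finitely-branching DAG with no finite vertices, there must be an endangered vertex. The second claim, by contrast, is a straightforward invariant-preservation argument once the right invariant (``all vertices of a fixed accepting run survive'') is chosen.
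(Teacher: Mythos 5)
Your argument is correct, but it is worth noting that the paper does not actually prove this lemma at all: its ``proof'' is a one-line citation of Corollary~3.3 of Kupferman and Vardi's paper \cite{KupfermanV01}, of which the lemma is a restatement. What you have written is essentially a reconstruction of the argument behind that cited result. Your treatment of the second claim (the invariant that every vertex $(\rho_i,i)$ of a fixed accepting run~$\rho$ survives every stage $\dagwiof{j}$, being neither finite nor endangered there) is complete and correct as stated. For the first claim you correctly identify the heart of the matter --- that in a rejecting, infinite, finitely-branching DAG with no finite vertices there must be an endangered vertex (via K\H{o}nig's lemma), and that each finite-removal/endangered-removal pair strictly decreases the number of states occurring at infinitely many levels, giving the $2n$ bound --- though you leave this width-decrease step at the level of a sketch, exactly as \cite{KupfermanV01} fills it in. One small slip: your parenthetical ``if $\dagwiof j$ is nonempty, it has a finite vertex'' is false in general (the paper itself observes that for a complete automaton $\dagwiof 1 = \dagwiof 0$, i.e., $\dagwiof 0$ has no finite vertices); fortunately nothing in your argument depends on it, since the progress measure is carried entirely by the endangered-vertex step. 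So: correct, and more self-contained than the paper, at the cost of re-deriving a standard external result.
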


\begin{proof}
Follows from Corollary~3.3 in \cite{KupfermanV01}.
\qed
\end{proof}

\vspace{-5.0mm}
\section{Complementing B\"{u}chi Automata}\label{sec:label}
\vspace{-2.0mm}

We use as the starting point the complementation procedure of
Schewe~\cite[Section 3.1]{Schewe09}, which we denote as \CompSchewe (the `S'
stands for `Schewe').
The procedure works with the notion of level rankings.
Given $n = |Q|$, a~\emph{(level) ranking} is a~function $f: Q \to \numsetof{2n}$ such that
$\{f(q_f) \mid q_f \in F\} \subseteq \{0, 2, \ldots, 2n\}$, i.e., $f$~assigns even
ranks to accepting states of~$\A$.%
\footnote{%
  Note that our basic definitions slightly differs from the ones in Section~2.3
  of~\cite{Schewe09}.
  This is because of a~typo in~\cite{Schewe09};
  indeed, if the procedure from~\cite{Schewe09} is implemented as
  is, the output does not accept the complement (there might be a macrostate
  $(S, O, f)$ where $S$~contains accepting states and $O$~is empty, and,
  therefore, the whole macrostate is accepting, which is wrong).
}
For a~ranking~$f$, the \emph{rank} of~$f$ is defined as~$\rankof f = \max\{f(q)
\mid q \in Q\}$.
For a~set of states $S \subseteq Q$, we call~$f$ to be $S$-\emph{tight} if
(i)~it has an odd rank~$r$,
(ii)~$\{f(s) \mid s \in S\} \supseteq \{1, 3, \ldots, r\}$, and
(iii)~$\{f(q) \mid q \notin S\} = \{0\}$.
A~ranking is \emph{tight} if it is $Q$-tight; we use $\cT$ to denote the set of
all tight rankings.
For a~pair of rankings~$f$ and~$f'$, a~set $S \subseteq Q$, and a~symbol~$a \in
\Sigma$, we use $f' \rankleq f$ iff for every $q \in S$ and $q' \in \delta(q,
a)$ it holds that $f'(q') \leq f(q)$.

The \CompSchewe procedure constructs the BA~$\BSven = (Q', \delta', I', F')$
whose components are defined as follows:
%
%
%
%
\vspace{-1.5mm}
\begin{itemize}
  \item  $Q' = Q_1 \cup Q_2$ where
    \begin{itemize}
      \item  $Q_1 = 2^Q$ and
      \item  $Q_2 = \hspace*{-1mm}
        \begin{array}[t]{ll}
          \{(S,O,f, i) \in \hspace*{0mm}& 2^Q \times 2^Q \times \cT \times \{0, 2, \ldots, 2n - 2\} \mid {} \\
          &f \text{ is $S$-tight}, O \subseteq S \cap f^{-1}(i)\},
        \end{array}$
    \end{itemize}
  \item  $I' = \{I\}$,
  \item  $\delta' = \delta_1 \cup \delta_2 \cup \delta_3$ where
    \begin{itemize}
      \item  $\delta_1: Q_1 \times \Sigma \to 2^{Q_1}$ such that $\delta_1(S, a) =
        \{\delta(S,a)\}$,
      \item $\delta_2 : Q_1 \times \Sigma \to 2^{Q_2}$ such that $\delta_2(S, a) =
        \{(S', \emptyset, f, 0) \mid S' = \delta(S, a),\linebreak f \text{ is } S'\text{-tight}\}$, and
      \item $\delta_3: Q_2 \times \Sigma \to 2^{Q_2}$ such that $(S', O', f', i') \in
        \delta_3((S, O, f, i), a)$ iff
        $S' = \delta(S, a), f' \rankleq f$, $\rankof f = \rankof{f'}$, $f'$ is $S'$-tight, and
        \begin{itemize}
          \item  $i' = (i+2) \mod (\rankof{f'} + 1)$ and $O' = f'^{-1}(i')$ if
            $O = \emptyset$ or
          \item  $i' = i$ and $O' = \delta(O, a) \cap f'^{-1}(i)$ if $O \neq
            \emptyset$, and
        \end{itemize}
    \end{itemize}
  \item  $F' = \{\emptyset\} \cup ((2^Q \times \{\emptyset\} \times \cT \times
    \omega) \cap Q_2)$.
\vspace{-1.5mm}
\end{itemize}
%

\noindent
Intuitively, \CompSchewe is an extension of the classical subset construction
for determinization of finite automata.
In particular, $Q_1, \delta_1$, and $I_1$ constitute the deterministic finite
automaton obtained from~$\A$ using the subset construction.
The automaton can, however, nondeterministically guess a~point at which it will
make a~transition to a~\emph{macrostate} $(S, O, f, i)$ in the $Q_2$~part;
this guess corresponds to a level in the run DAG of the accepted word from which
the ranks of all levels form an $S$-tight ranking, where the $S$~component of
the macrostate is again a~subset from the subset construction.
In the
$Q_2$~part, $\BSven$~makes sure that in order for a~word to be accepted by~$\BSven$,
all runs of~$\A$ over the word need to touch an accepting state only finitely
many times.
This is ensured by the~$f$ component, which,
roughly speaking, maps states to ranks of corresponding vertices in the run DAG over the
given word.
The $O$~component is used for a~standard cut-point construction, and is used to
make sure that all runs that have reached an~accepting state in~$\A$ will
eventually leave it (this can happen for different runs at a~different point).
The~$S, O$, and $f$~components were already present in~\cite{KupfermanV01}.
The $i$~component was introduced by Schewe to improve the complexity of the
construction; it is used to cycle over phases, where in each phase we focus on
cut-points of a~different rank.
See~\cite{Schewe09} for a~more elaborate exposition.

\begin{proposition}[Corollary~3.3 in \cite{Schewe09}]
\label{prop:sven_correct}
$\langof \BSven = \overline{\langof \A}$.
\end{proposition}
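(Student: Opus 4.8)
The plan is to prove the two inclusions $\overline{\langof\A}\subseteq\langof\BSven$ and $\langof\BSven\subseteq\overline{\langof\A}$ separately, using the run DAG and its ranking as the bridge between the semantic and the syntactic side. Both directions are essentially the classical Kupferman--Vardi argument (as recalled in Section~\ref{sec:rundag}) adapted to Schewe's $i$-component; since the statement is quoted as Corollary~3.3 of~\cite{Schewe09}, I would keep the exposition short and lean on Lemma~\ref{lem:rankings}.

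For the inclusion $\overline{\langof\A}\subseteq\langof\BSven$, take a~word $\word\notin\langof\A$ and consider its run DAG~$\dagw$. By Lemma~\ref{lem:rankings} every vertex of~$\dagw$ has a finite rank in $\numsetof{2n}$, and the map $l\mapsto(q\mapsto\rankwof{q,l})$ assigns to each level~$l$ a level ranking; one checks that accepting vertices always receive even ranks (an accepting vertex is never finite-then-removed at an odd stage), so these are genuine rankings in the sense of the construction. The key monotonicity fact is that along any edge the rank cannot increase, i.e.\ the level-$(l{+}1)$ ranking stands in the relation $\rankleq[S_l]$ (with $S_l=\delta(I,\wordof{0:l})$) to the level-$l$ ranking, and ranks along a~trace are eventually constant. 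I would then pick the first level~$m$ from which all subsequent level-rankings are tight on their reachable support and have a~common odd maximal rank~$r$ (such an~$m$ exists because the sequence of ranks is non-increasing and each ranking's profile stabilises), let $\BSven$ stay in the $Q_1$~part reading $\wordof{0:m}$ and then jump via~$\delta_2$ into $(S_m,\emptyset,f_m,0)$, and follow the level rankings thereafter. The $i$-component cycles through the odd ranks; at each phase with index~$i$ the set~$O$ tracks exactly the vertices of rank~$i$ that survive, and since $\dagw$ is rejecting no trace of rank~$i$ is fair, hence $O$~is emptied infinitely often and the run visits~$F'$ infinitely often. So $\word\in\langof\BSven$.

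For the converse $\langof\BSven\subseteq\overline{\langof\A}$, suppose $\rho'$ is an~accepting run of~$\BSven$ on~$\word$ and, towards a~contradiction, that $\word\in\langof\A$, witnessed by an accepting run~$\sigma$ of~$\A$. After the (unique) transition into the $Q_2$~part at some level~$m$, the run has the form $(S_l,O_l,f_l,i_l)_{l\ge m}$ with $S_l=\delta(I,\wordof{0:l})$, so $\sigma_l\in S_l$ for all~$l$; by the constraint $f_{l+1}\rankleq[S_l]^{\wordof l} f_l$ the sequence $(f_l(\sigma_l))_{l\ge m}$ is non-increasing in~$\omega$, hence eventually equal to some constant~$c$. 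Since $\sigma$~is accepting, $\sigma_l\in F$ for infinitely many~$l$, and accepting states get even ranks, so $c$~is even. Because the rank~$\rankof{f_l}=r$ is constant and the index~$i$ cycles through $\{0,2,\ldots\}$ modulo $r+1$, the phase index equals~$c$ infinitely often; whenever $i_l=c$ and $O_l=\emptyset$ the next~$O$ is set to $f_{l+1}^{-1}(c)\ni\sigma_{l+1}$, and while $i$~stays equal to~$c$ the set $O$~evolves as $\delta(O,\cdot)\cap f^{-1}(c)$, which (once $\sigma_l$ has stabilised at rank~$c$) always contains~$\sigma_l$; hence from that point on $O_l\ne\emptyset$ along this phase, contradicting that $\rho'$ visits $F'=\{\emptyset\}\cup(\cdots\times\{\emptyset\}\times\cdots)$ infinitely often. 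Therefore $\word\notin\langof\A$.

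The main obstacle, and the only part deserving real care, is the first direction: showing that the level rankings derived from~$\dagw$ (i) are eventually all tight with one common odd rank on the reachable support, and (ii) that one can choose the jump level~$m$ so that the $O$-component, driven by the cyclic $i$-index, is emptied infinitely often. This is exactly where Schewe's refinement over Kupferman--Vardi lies, and it hinges on the fact that $\dagw$ being rejecting means every rank class contains no fair trace, so each cut-point phase terminates. The bookkeeping for the $i$~component---verifying $i'=(i+2)\bmod(r+1)$ together with $O'=f'^{-1}(i')$ correctly realises "one phase per odd rank"---is routine but fiddly, and I would either do it carefully or simply cite the corresponding lemma of~\cite{Schewe09}.
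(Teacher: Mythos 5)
The paper does not prove this proposition at all: it is imported verbatim as Corollary~3.3 of~\cite{Schewe09}, so there is no in-paper argument to compare yours against. Your outline is the standard rank-based correctness argument (Kupferman--Vardi ranks on the run DAG, eventual tightness of level rankings, and the cut-point bookkeeping for Schewe's $i$-component), which is indeed how the cited result is established. The direction $\langof\BSven\subseteq\overline{\langof\A}$ is essentially complete and correct as you state it (modulo the degenerate case of a run that never leaves the $Q_1$ part, which cannot be accepting for a complete automaton with nonempty initial set, as the paper itself notes in the proof of Lemma~\ref{lem:optrankdel}).

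There is one genuine imprecision in the other direction, in the step you yourself flag as the delicate one. You justify the emptying of the $O$-component by saying that ``$\dagw$ being rejecting means every rank class contains no fair trace, so each cut-point phase terminates.'' That invariant is too weak: a non-fair infinite trace all of whose vertices have rank~$i$ would keep $O$ nonempty forever and the phase would never terminate. What actually makes the phase terminate is that for even~$i$ the rank-$i$ vertices are, by construction of the ranking procedure, exactly the \emph{finite} vertices of~$\dagwiof{i}$; hence the set of rank-$i$ descendants of any level's rank-$i$ vertices is finite, and $O=\delta(O,\cdot)\cap f^{-1}(i)$ must reach $\emptyset$ after finitely many steps. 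In other words, there is no infinite trace (fair or not) staying at an even rank. With that correction, and with the eventual-tightness lemma either proved via the Friedgut--Kupferman--Vardi width argument or cited, your sketch is a faithful reconstruction of Schewe's proof.
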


\vspace{-5.0mm}
\section{Purging Macrostates with Incompatible Rankings}\label{sec:purging}
\vspace{-2.0mm}

Our first optimisation is based on removing from~$\BSven$ macrostates $(S, O, f,
i) \in Q_2$ whose level ranking~$f$ assigns some states of~$S$ an unnecessarily
high rank.
Intuitively, when $S$~contains a~state~$p$ and a~state~$q$ such that~$p$ is
(directly) simulated by~$q$, i.e.~$p \dirsimby q$, then $f(p)$ needs to be at
most~$f(q)$.
This is because in any word~$\word$ and its run DAG $\dagw$ in~$\A$, if $p$ and
$q$ are at the same level~$i$ of~$\dagw$, then the ranks of their vertices $v_p$
and $v_q$ at the given level are either both $\omega$ (when $\word \in \langof
\A$), or such that $\rankwof{v_p} \leq \rankwof{v_q}$ otherwise.
This is because, intuitively, the DAG rooted in~$v_p$ in~$\dagw$ is isomorphic to a~subgraph of the~DAG rooted in~$v_q$.

Formally, consider the following predicate on macrostates of~$\BSven$:
\begin{equation}
  \toremdirof{S, O, f, i} \quad \text{iff} \quad \exists p,q \in S: p\dirsimby q
  \land f(p) > f(q) .
\end{equation}
%
We modify \CompSchewe to purge macrostates that satisfy~$\toremdir$.
That is, we create a~new procedure \PurgeDi obtained from \CompSchewe
by modifying the definition of~$\BSven$ such that all occurrences of~$Q_2$ are
substituted by~$\Qtwodi$ and
\begin{equation}
  \Qtwodi = Q_2 \setminus \{(S,O,f,i) \in Q_2 \mid \toremdirof{S,O,f,i}\}.
\end{equation}
We denote the BA obtained from \PurgeDi as~$\BSvenRankdir$.
The following lemma, proved in \secref{sec:proofs_purging} states the
correctness of this construction.
\begin{restatable}{lemma}{lemmaOptRankDir}
  \label{lem:optrankdir}
  $\langof \BSvenRankdir = \langof \BSven$
\end{restatable}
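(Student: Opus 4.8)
The plan is to show the two inclusions $\langof{\BSvenRankdir} \subseteq \langof{\BSven}$ and $\langof{\BSven} \subseteq \langof{\BSvenRankdir}$ separately. The first inclusion is immediate: $\BSvenRankdir$ is obtained from $\BSven$ by deleting states (and the transitions touching them), so every accepting run of $\BSvenRankdir$ is already an accepting run of $\BSven$, hence $\langof{\BSvenRankdir} \subseteq \langof{\BSven}$. The real work is the opposite inclusion, and by Proposition~\ref{prop:sven_correct} it suffices to show $\overline{\langof{\A}} \subseteq \langof{\BSvenRankdir}$, i.e. that every word $\word \notin \langof{\A}$ is accepted by $\BSvenRankdir$.

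\textbf{Key step: a ``compatible'' rank assignment along the run DAG.} The standard proof that $\word \notin \langof{\A}$ implies $\word \in \langof{\BSven}$ builds an accepting run of $\BSven$ from the ranking $\rankwof{\cdot}$ of the run DAG $\dagw$ guaranteed to be finite by Lemma~\ref{lem:rankings}; at each level $i$ the $f$-component of the reached macrostate is essentially $p \mapsto \rankwof{p, i}$ (suitably shifted to an $S$-tight ranking by ignoring states that are not present and padding odd gaps). The crucial observation is the monotonicity claim sketched in \secref{sec:purging}: if $p \dirsimby q$ and both $(p,i), (q,i) \in \dagw$, then $\rankwof{p,i} \le \rankwof{q,i}$. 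I would prove this claim first, as a standalone sub-lemma. The intuition given in the paper is that the sub-DAG of $\dagw$ rooted at $(p,i)$ embeds (as a graph, respecting acceptance) into the sub-DAG rooted at $(q,i)$: using a direct-simulation winning strategy $\strategdir$ for Duplicator from $(p,q)$, every trace from $(p,i)$ in $\dagw$ can be mapped to a trace from $(q,i)$ in $\dagw$ that visits accepting states whenever the original does. I then need to push this through the inductive rank-assignment procedure of \secref{sec:rundag}: by induction on the round number $j$, if $(p,i)$ is removed in round $j$ (so $\rankwof{p,i} \le j$) because it is finite or endangered in $\dagwiof{j}$, then $(q,i)$ must also have been removed by round $j$ --- a finite sub-DAG at $p$ forces, via the embedding, that the reachable part at $q$ restricted to the relevant image is finite too, and similarly ``endangered'' (cannot reach an accepting vertex) transfers from $q$ to $p$ in the contrapositive direction. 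One has to be careful: the embedding argument shows $\rankwof{p,i} \le \rankwof{q,i}$ directly only when the ranks are finite; when $\word \notin \langof{\A}$ all ranks are $\le 2n$ by Lemma~\ref{lem:rankings}, so this case is exactly the one we need, and when $\word \in \langof{\A}$ we do not care.

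\textbf{Finishing.} Granted the sub-lemma, take $\word \notin \langof{\A}$ and the accepting run of $\BSven$ on $\word$ constructed from $\rankwof{\cdot}$ in the proof of Proposition~\ref{prop:sven_correct}. For every macrostate $(S, O, f, i)$ on this run and every $p, q \in S$ with $p \dirsimby q$, we have $p, q$ both present at the corresponding level of $\dagw$, so by the sub-lemma $\rankwof{p,\cdot} \le \rankwof{q,\cdot}$, hence (since $f$ is exactly this level ranking up to the shift to an $S$-tight ranking, which is order-preserving on the states actually in $S$) $f(p) \le f(q)$. Thus $\neg\,\toremdirof{S,O,f,i}$ for every macrostate on the run, so the whole run stays inside $\Qtwodi$ and is an accepting run of $\BSvenRankdir$. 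Therefore $\overline{\langof{\A}} \subseteq \langof{\BSvenRankdir}$, and combined with the first inclusion and Proposition~\ref{prop:sven_correct} we get $\langof{\BSvenRankdir} = \langof{\BSven}$.

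\textbf{Main obstacle.} The delicate point is making the ``DAG rooted at $(p,i)$ embeds into the DAG rooted at $(q,i)$'' argument precise and then verifying that this embedding is preserved by each step of the rank-assignment fixpoint computation, in particular matching up the alternation between removing \emph{finite} vertices and removing \emph{endangered} vertices. A clean way is to fix a Duplicator strategy for $p \dirsimby q$, use it to define an embedding $\phi$ of (a spanning sub-DAG of) the reachable part of $(p,i)$ into the reachable part of $(q,i)$ with $\phi$ mapping accepting vertices to accepting vertices and level $i+k$ to level $i+k$, and then prove by induction on $j$ the invariant ``if $v$ survives into $\dagwiof{j}$ then $\phi(v)$ survives into $\dagwiof{j}$'', from which $\rankwof{p,i} \le \rankwof{q,i}$ follows. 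Transitivity and reflexivity of $\dirsimby$ (noted in \secref{sec:sim}) are used implicitly but cause no trouble.
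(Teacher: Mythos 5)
Your proof follows the same route as the paper: the nontrivial inclusion is obtained by taking the canonical accepting run of $\BSven$ whose $f$-components are the run-DAG ranks and showing, via the rank-monotonicity sub-lemma ($p \dirsimby q$ and $(p,i),(q,i)\in\dagw$ imply $\rankwof{p,i} \le \rankwof{q,i}$, which is the paper's Lemma~\ref{lem:simrundir}), that no macrostate on that run satisfies $\toremdir$; and your strategy-induced embedding with the invariant ``if $v$ survives into $\dagwiof{j}$ then $\phi(v)$ survives into $\dagwiof{j}$'' is essentially the induction the paper carries out for Lemma~\ref{lem:simrundag}, of which Lemma~\ref{lem:simrundir} is the simplified direct-simulation instance. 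One caveat: in your middle paragraph the induction is stated with the roles of $p$ and $q$ swapped --- ``a finite sub-DAG at $p$ forces \ldots\ the reachable part at $q$ \ldots\ finite too'' is false, since the embedding covers only part of $q$'s reachable set; the correct direction is the one you give in the final paragraph, namely that non-finiteness and non-endangeredness (i.e., survival) transfer from $(p,i)$ to $(q,i)$.
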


\noindent
The following natural question arises: Is it possible to extend the purging
technique from direct simulation to other notions of simulation?
For \emph{fair} simulation, this cannot be done.
The reason is that, for a~pair of states~$p$ and~$q$ s.t.~$p \fairsimby q$, it
can happen that for a~word~$\beta \in \Sigma^\omega$, there can be a~trace from
$p$ over~$\beta$ that finitely many times touches an accepting
state (i.e., a~vertex of~$p$ in the corresponding run DAG can have any rank
between~$0$ and~$2n$), while all traces from~$q$ over~$\beta$ can completely avoid
touching any accepting state.
From the point of view of fair simulation, these are both unfair traces, and,
therefore, disregarded.

On the other hand, \emph{delayed} simulation---which is often much richer than
direct simulation---can be used, with a~small change.
Intuitively, the delayed simulation can be used because $p \delsimby q$
guarantees that on every level of trees in~$\dagw$ rooted in~$v_p$ and in~$v_q$ respectively,
the rank of the vertex $v_p$ is at most by one larger than
the rank of vertex $v_q$ (or by any number smaller).
Formally,
let~$\toremdel$ be the following predicate on macrostates of~$\BSven$:
\begin{equation}
  \toremdelof{S, O, f, i}\quad\text{iff}\quad \exists p,q \in S: p\delsimby q
  \land f(p) > \rond{f(q)},
\end{equation}
%
%
where $\rond{x}$ for $x \in \omega$ denotes the smallest even number greater or
equal to~$x$ and $\rond \omega = \omega$.
Similarly as above, we create a new procedure, called \PurgeDe, which
is obtained from \CompSchewe by modifying the definition of~$\BSven$ such that
all occurrences of~$Q_2$ are substituted by~$\Qtwode$ and
\begin{equation}
  \Qtwode = Q_2 \setminus \{(S,O,f,i) \in Q_2 \mid \toremdelof{S,O,f,i}\}.
\end{equation}
We denote the BA obtained from \PurgeDe as~$\BSvenRankdel$.
\vspace{-1mm}
\begin{restatable}{lemma}{lemmaOptRankDel}
  \label{lem:optrankdel}
  $\langof \BSvenRankdel = \langof \BSven$
\end{restatable}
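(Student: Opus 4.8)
The inclusion $\langof \BSvenRankdel \subseteq \langof \BSven$ is immediate: by construction $\BSvenRankdel$ is the sub-automaton of $\BSven$ induced by the state set $Q_1 \cup \Qtwode$ (with $\delta'$ and $F'$ restricted accordingly and $I'=\{I\}\subseteq Q_1$ unchanged), so every run of $\BSvenRankdel$ is a run of $\BSven$ that is accepting in one iff it is accepting in the other.

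For the converse, let $\word \in \langof \BSven$. By Proposition~\ref{prop:sven_correct} we have $\word \notin \langof \A$, so by Lemma~\ref{lem:rankings} every vertex of the run DAG $\dagw$ receives a finite rank in $\numsetof{2n}$. The standard correctness argument for \CompSchewe (originating in~\cite{KupfermanV01}) yields an accepting run $\run = R_0 R_1 \cdots$ of $\BSven$ on $\word$ of a \emph{canonical} shape: there is $\ell_0$ with $R_\ell \in Q_1$ for $\ell < \ell_0$ and $R_\ell = (S_\ell, O_\ell, f_\ell, i_\ell)$ for $\ell \geq \ell_0$, where $S_\ell = \{p \mid (p,\ell) \in \dagw\}$ and $f_\ell$ is the level ranking read off the run DAG, i.e.\ $f_\ell(p) = \rankwof{p,\ell}$ for $p \in S_\ell$ and $f_\ell(p)=0$ otherwise. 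Since \PurgeDe does not touch $Q_1$, every $R_\ell$ remains a state of $\BSvenRankdel$ provided $\neg \toremdelof{S_\ell, O_\ell, f_\ell, i_\ell}$ for $\ell \geq \ell_0$, in which case $\run$ also witnesses $\word \in \langof \BSvenRankdel$. Because $f_\ell$ is the DAG ranking, $\neg \toremdelof{S_\ell,O_\ell,f_\ell,i_\ell}$ amounts to: for all $p,q \in S_\ell$ with $p \delsimby q$, $\rankwof{p,\ell}\leq\rond{\rankwof{q,\ell}}$. Thus the lemma follows from the claim: \emph{if $\word \notin \langof \A$, $p \delsimby q$ and $(p,\ell),(q,\ell)\in\dagw$, then $\rankwof{p,\ell}\leq\rond{\rankwof{q,\ell}}$.}

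To prove the claim, the plan is to argue about the stratification $\dagwiof 0 \supseteq \dagwiof 1 \supseteq \cdots$ defining the ranks, using $\rankwof v \geq j \iff v \in \dagwiof j$. Writing $v_p=(p,\ell)$, $v_q=(q,\ell)$, an elementary manipulation of $\rond{\cdot}$ shows that $\rankwof{v_p}\leq\rond{\rankwof{v_q}}$ is \emph{equivalent} to
\[
  \forall m \geq 0 :\quad v_p \in \dagwiof{2m+1}\ \limplies\ v_q \in \dagwiof{2m+1},
\]
and I would prove the latter by induction on $m$, in the spirit of the proof of Lemma~\ref{lem:optrankdir} (but with the extra ingredient discussed below). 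The induction unfolds membership in the successive $\dagwiof j$ into the conditions ``lies on an infinite path inside $\dagwiof{j-1}$'' for odd $j$ and ``reaches an accepting vertex inside $\dagwiof{j-1}$'' for even $j$; one then relates, through Duplicator's winning strategy, witnessing paths of $v_p$ to paths of $v_q$ and uses the induction hypothesis to place the latter in the appropriate member of the stratification.

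The step that genuinely departs from the direct-simulation case---and that I expect to be the main obstacle---is controlling \emph{where} Duplicator's response lives. For direct simulation a winning Duplicator response keeps $p_i\dirsimby q_i$ and $p_i\in\acc\limplies q_i\in\acc$ at \emph{every} step, so the induction hypothesis applies vertex by vertex along the response and keeps it inside exactly the same $\dagwiof j$ as Spoiler's trace, which yields the direct-case bound $\rankwof{v_p}\leq\rankwof{v_q}$. Under delayed simulation Duplicator may lag behind---after Spoiler visits $\acc$ she is obliged to visit $\acc$ only at \emph{some} later position---and may therefore momentarily drop out of $\dagwiof j$. To push the induction through I would refine the game configuration with a pending-obligation bit and maintain, along a winning play, two relations between the players' current states: the relation $\delsimby$ while no obligation is outstanding, and a strictly weaker relation (Duplicator wins from the same pair of states but owing one visit to $\acc$) while one is. The core of the argument is then to show that a pair related in the weaker way confines Duplicator's response to a member of the stratification that is only \emph{one stage coarser} than for a $\delsimby$-related pair; this single-stage slack, which exists precisely because delayed simulation bounds by one the ``debt'' that Duplicator may carry, is exactly what replaces the upper bound $\rankwof{v_q}$ by $\rond{\rankwof{v_q}}$. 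For \emph{fair} simulation this debt is instead unbounded---Duplicator need never discharge it when Spoiler's trace is unfair---which is why the same purging is unsound there. \qed
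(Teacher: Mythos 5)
Your reduction of the statement to a rank-comparison claim on the run DAG is correct and is exactly the route the paper takes: the $(\subseteq)$ direction is trivial by state removal, and for $(\supseteq)$ one takes the canonical accepting run of $\BSven$ whose level rankings $f_\ell$ are read off $\dagw$, so that survival of the run under \PurgeDe reduces to showing that $p \delsimby q$ and $(p,\ell),(q,\ell)\in\dagw$ imply $\rankwof{p,\ell}\leq\rond{\rankwof{q,\ell}}$. This claim is precisely the paper's Lemma~\ref{lem:simrundag}, and your reformulation of it via the stratification ($v_p\in\dagwiof{2m+1}\limplies v_q\in\dagwiof{2m+1}$ for all $m$) is a correct equivalent.

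The genuine gap is that you never prove this claim: everything from ``the plan is to argue'' onward is a description of an intended induction, not the induction itself. In particular, the assertion that a pair carrying a pending obligation ``confines Duplicator's response to a member of the stratification that is only one stage coarser'' \emph{is} the entire content of the lemma, and it is exactly the step you flag as the main obstacle and then leave unexecuted. Without it the factor $\rond{\cdot}$ is unjustified and the purge could, for all the argument shows, delete a macrostate of the canonical run. For comparison, the paper closes this gap without any obligation-bit refinement of the game: since strategies are positional in the paper's formulation, a winning Duplicator strategy preserves $\delsimby$ stepwise (Lemma~\ref{lem:strind}), and the induction on the stage $l$ at which $(q,i)$ is removed proceeds by contradiction --- an infinite (resp.\ accepting-visiting) path $\pi$ from $(p,i)$ inside $\dagwiof{l}$ is mapped by $\strategdel$ to a path from $(q,i)$ that must exit $\dagwiof{l}$ early, producing same-level vertices $(p',x)\in\dagwiof{f}\setminus\dagwiof{f+1}$ and $(q',x)\in\dagwiof{e}\setminus\dagwiof{e+1}$ with $p'\delsimby q'$ and $f\not\leq e+1$, hence $f\not\leq\rond{e}$, contradicting the induction hypothesis. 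The one-step ``debt'' of delayed simulation is thus absorbed directly into the inductive invariant $k\leq\rond{l}$ rather than into an enriched game state. If you carry out your obligation-bit version you would need to prove the confinement property for the weaker relation from scratch; the paper's route is shorter and is the one you should complete.
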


\noindent
The use of $\rond{f(q)}$ in~$\toremdel$ results in the fact that the two purging
techniques are incomparable.
For instance, consider a~macrostate $(\{p,q\}, \emptyset, \{p \mapsto 2, q
\mapsto 1\}, 0)$ such that $p \dirsimby q$ and $p \delsimby q$.
Then the macrostate will be purged in \PurgeDi, but not in \PurgeDe.

The two techniques can, however, be easily combined into a~third procedure
\PurgeDiDe, when~$Q_2$ is substituted in \CompSchewe with $\Qtwodide$ defined as
\begin{equation}
  \Qtwodide = Q_2 \setminus \{(S,O,f,i) \in Q_2 \mid \toremdirof{S,O,f,i} \lor \toremdelof{S,O,f,i}\}.
\end{equation}
We denote the resulting BA as~$\BSvenRankdirdel$.

\vspace{-1mm}
\begin{restatable}{lemma}{lemmaOptRankDirdel}
	\label{lem:optrankdirdel}
  $\langof \BSvenRankdirdel = \langof \BSven$
\end{restatable}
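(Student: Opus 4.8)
The plan is to observe that this lemma follows immediately from the two lemmas already established, namely Lemma~\ref{lem:optrankdir} and Lemma~\ref{lem:optrankdel}, once we set up the right language inclusions. First I would note the trivial inclusion $\langof \BSvenRankdirdel \subseteq \langof \BSven$: since $\Qtwodide \subseteq Q_2$, every state, transition, initial state, and accepting state of $\BSvenRankdirdel$ is also one of $\BSven$, so $\BSvenRankdirdel$ is a sub-automaton of $\BSven$ and hence accepts a subset of its language. This direction needs no simulation reasoning at all.

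For the reverse inclusion $\langof \BSven \subseteq \langof \BSvenRankdirdel$, the key point is that an accepting run of $\BSven$ on a word $\word \in \langof \BSven$ can be chosen so that it avoids \emph{all} purged macrostates simultaneously. The cleanest way to get this is to extract, from the proofs of Lemmas~\ref{lem:optrankdir} and~\ref{lem:optrankdel}, the stronger statement that each of them actually shows: for every $\word \in \langof \BSven$ there is an accepting run of $\BSven$ on $\word$ that uses no macrostate satisfying $\toremdir$ (respectively $\toremdel$). I expect those proofs to proceed by taking the ``canonical'' accepting run of $\BSven$ built from the rank assignment $\rankwof{\cdot}$ on the run DAG $\dagw$ (as in the construction behind Proposition~\ref{prop:sven_correct}), and arguing via Lemma~\ref{lem:rankings} together with the subgraph-isomorphism intuition spelled out in the text that this canonical run automatically satisfies $f(p) \le f(q)$ whenever $p \dirsimby q$, and $f(p) \le \rond{f(q)}$ whenever $p \delsimby q$. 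Since these are two properties of \emph{one and the same} canonical run, that run avoids $\Qtwodide$'s complement entirely, so it is an accepting run of $\BSvenRankdirdel$, giving $\word \in \langof \BSvenRankdirdel$.

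Concretely I would carry out the steps in this order: (1) state and prove the trivial inclusion from the sub-automaton observation; (2) recall the canonical accepting run of $\BSven$ associated with a word $\word \in \langof \BSven$ and its defining property that the $f$-component on each level equals the rank function $\rankwof{\cdot}$ restricted to that level of $\dagw$; (3) invoke the core combinatorial facts already argued in Sections~\ref{sec:purging} for the direct and delayed cases, i.e.\ that $p \dirsimby q$ forces $\rankwof{v_p} \le \rankwof{v_q}$ at a common level and $p \delsimby q$ forces $\rankwof{v_p} \le \rond{\rankwof{v_q}}$, using that the DAG rooted at $v_p$ embeds into the one rooted at $v_q$; (4) conclude that the canonical run never visits a macrostate in $\{(S,O,f,i)\in Q_2 \mid \toremdirof{S,O,f,i} \lor \toremdelof{S,O,f,i}\}$, hence is a run of $\BSvenRankdirdel$, and it is accepting because acceptance of macrostates is unchanged by the restriction.

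The main obstacle is a presentational/structural one rather than a mathematical one: the proofs of Lemmas~\ref{lem:optrankdir} and~\ref{lem:optrankdel} as stated only claim language equality, not the sharper ``the canonical run avoids the purged states'' statement I want to reuse. So the real work is to make sure that when those two proofs are written in \secref{sec:proofs_purging}, they are phrased in terms of the single canonical run (or, at minimum, that they establish the run-level invariant), so that the combined lemma can simply cite both invariants for the same run. If instead one only has the black-box language equalities $\langof \BSvenRankdir = \langof \BSven$ and $\langof \BSvenRankdel = \langof \BSven$, these do \emph{not} by themselves yield $\langof \BSvenRankdirdel = \langof \BSven$, because an accepting run surviving in $\BSvenRankdir$ and a (different) accepting run surviving in $\BSvenRankdel$ need not coincide; one genuinely needs the run-uniformity. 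I would therefore make the canonical-run formulation explicit in the earlier proofs and then this lemma is a two-line corollary.
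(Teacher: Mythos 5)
Your proposal matches the paper's proof: the paper likewise takes the canonical accepting run of $\BSven$ whose $f$-components equal the run-DAG ranks (as constructed in the proof of Lemma~\ref{lem:optrankdel}) and invokes both Lemma~\ref{lem:simrundir} and Lemma~\ref{lem:simrundag} on that \emph{single} run to conclude it avoids every macrostate satisfying $\toremdir$ or $\toremdel$, hence survives in $\BSvenRankdirdel$. Your observation that the two black-box language equalities alone would not suffice, and that the run-level invariant must be shared, is exactly the point the paper's $\circledast$-substitution mechanism is designed to exploit.
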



\vspace{-5.0mm}
\subsection{Proofs of Lemmas~\ref{lem:optrankdir}, \ref{lem:optrankdel}, and \ref{lem:optrankdirdel}}
\label{sec:proofs_purging}
\vspace{-1.0mm}

\enlargethispage{3mm}

We first give a~lemma that an~$x$-strategy~$\strategx$ preserves an
$x$-simulation~$\xsimby$.

\vspace{-1mm}
\begin{lemma}\label{lem:strind}
	Let $\xsimby$ be an $x$-simulation (for $x \in \{\mathit{di, de, f}\}$). Then, the following holds:
	$\forall p,q \in Q: p\xsimby q \land p \ltr a p' \in \delta \limplies
	\exists q'\in Q: q \ltr a q' \in \delta \land p' \xsimby q'$.
\end{lemma}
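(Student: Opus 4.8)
The plan is to exploit the game characterisation of $\xsimby$ together with the fact that the strategies considered in the paper are memoryless. Since $p \xsimby q$, Duplicator has a winning strategy $\strategx$ in the $\cond{x}$ simulation game from $(p,q)$. Let Spoiler open the game by the move $p \ltr{a} p'$, which is legal because $p \ltr{a} p' \in \delta$, and set $q' := \strategx(q, p \ltr{a} p')$. By the definition of a strategy, $q \ltr{a} q' \in \delta$, which is the first half of the conclusion. It remains to show $p' \xsimby q'$, and for this I claim that $\strategx$ \emph{itself} is a winning strategy for Duplicator in the $\cond{x}$ game from $(p',q')$.

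To prove the claim, take an arbitrary trace $\pi_{p'} = p' \ltr{\wordof 0} p'_1 \ltr{\wordof 1} \cdots$ that Spoiler may play from $p'$, and let $\pi_{q'} = \strategx(q', \pi_{p'})$ be Duplicator's response. Prepending the opening round yields traces $\pi_p = p \ltr{a} p' \ltr{\wordof 0} p'_1 \cdots$ and $\pi_q = q \ltr{a} q' \ltr{\wordof 0} q'_1 \cdots$; since $\strategx$ depends only on Duplicator's current state and Spoiler's current move, we have $\pi_q = \strategx(q, \pi_p)$, so $(\pi_p, \pi_q)$ is a play from $(p,q)$ conforming to $\strategx$, hence a winning one, i.e.\ $\condof{x}{\pi_p,\pi_q}$ holds. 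I would then check, by a short case analysis, that $\condof{x}{\pi_p,\pi_q}$ implies $\condof{x}{\pi_{p'},\pi_{q'}}$, noting that the $j$-th state of $\pi_{p'}$ is the $(j{+}1)$-st state of $\pi_p$ (and likewise for the $q$-side). For $x = \mathit{di}$, restricting the implication ``$i$-th state of $\pi_p$ accepting $\limplies$ $i$-th state of $\pi_q$ accepting'' to $i \geq 1$ gives exactly $\condof{\mathit{di}}{\pi_{p'},\pi_{q'}}$. For $x = \mathit{de}$, if the $j$-th state of $\pi_{p'}$ is accepting then the $(j{+}1)$-st state of $\pi_p$ is, so some $k \geq j{+}1$ has the $k$-th state of $\pi_q$ accepting, i.e.\ the $(k{-}1)$-st state of $\pi_{q'}$ is accepting with $k{-}1 \geq j$. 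For $x = f$, prepending one transition changes neither the fairness of $\pi_{p'}$ versus $\pi_p$ nor that of $\pi_{q'}$ versus $\pi_q$, so fairness of $\pi_{p'}$ propagates through $\pi_p, \pi_q$ to $\pi_{q'}$. Hence $\strategx$ is winning from $(p',q')$, giving $p' \xsimby q'$.

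There is no genuine obstacle here; the only point needing care is the observation that the strategies are memoryless, which is what licenses reusing the same $\strategx$ in the ``shifted'' game played from $(p',q')$, together with the routine index bookkeeping (shift by one) in the three winning conditions.
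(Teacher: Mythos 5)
Your proposal is correct and follows essentially the same route as the paper's proof: pick $q'$ via Duplicator's winning strategy on the opening move $p \ltr{a} p'$, and argue that the same (memoryless) strategy remains winning from $(p',q')$ because the winning condition $\cond{x}$ is preserved under dropping the first transition of both traces. You spell out the index-shift case analysis for $\mathit{di}$, $\mathit{de}$, and $f$ that the paper dismisses as ``easy to see,'' and you make explicit the reliance on memorylessness that the paper leaves implicit, but the argument is the same.
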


\begin{proof}
  Let~$p,q \in Q$ such that~$p \xsimby q$ and $p \ltr a p' \in \delta$, and let
  $\pi_p$ be a trace starting from~$p$ with the first transition $p \ltr a p'$.
  From the definition of $x$-simulation, there is a winning Duplicator
  strategy~$\strategx$; let~$\pi_q = \strategx(q', \pi_p)$ and let $q \ltr a q'$
  be the first transition of~$\pi_q$.
  Let $\pi_{p'}$ and $\pi_{r'}$ be traces obtained from $\pi_p$ and $\pi_r$ by
  removing their first transitions.
  It is easy to see that if~$\condof x {\pi_{p}, \pi_{r}}$ then also $\condof x
  {\pi_{p'}, \pi_{r'}}$ for any $x \in \{\mathit{di, de, f}\}$.
  It follows that~$\strategx$ is also a~winning Duplicator strategy
  from~$(p',r')$.
  \qed
\end{proof}






\noindent
Next, we focus on delayed simulation and the proof of
Lemma~\ref{lem:optrankdel}.
In the next lemma, we show that if there is a~pair of vertices on some
level of the run DAG where one vertex delay-simulates the other one, there
exists a~relation between their rankings.
This will be used to purge some useless rankings from the complemented BA.

\vspace{-1mm}
\begin{lemma}\label{lem:simrundag}
  Let $p,q \in Q$ such that $p \delsimby q$
	and $\dagw = (V, E)$ be the run DAG of $\A$ over~$\word$.
  For all $i \geq 0$, it holds that $(p, i) \in V \wedge (q, i) \in V \limplies
  \rankwof{p, i} \leq \rond{\rankwof{q, i}}$.
\end{lemma}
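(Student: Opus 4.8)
The plan is to proceed by a double induction, essentially following the structure of the rank-assignment procedure from Section~\ref{sec:rundag}. Fix $p \delsimby q$ and a word $\word$ with run DAG $\dagw$, and suppose $(p,i),(q,i) \in V$. The case $\rankwof{q,i} = \omega$ is trivial since then $\rond{\rankwof{q,i}} = \omega$, so assume $\word \notin \langof \A$ and by Lemma~\ref{lem:rankings} all ranks are finite and bounded by $2n$. I would argue by induction on the stage $j$ of the rank-assignment procedure (the loop index that increases by $2$ each iteration), showing the invariant: for every $j$, if $(q,i)$ receives rank $j$ or $j+1$ in $\dagwiof j$, then $(p,i)$ receives a rank that is at most $\rond{j}$ respectively $\rond{j+1} = j+2$. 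Equivalently, it suffices to show that $(p,i)$ survives into $\dagwiof j$ whenever $(q,i)$ does, and, one level finer, that if $(q,i)$ is finite (resp. endangered) in $\dagwiof j$ then $(p,i)$ is finite (resp. endangered) in $\dagwiof j$ (resp. $\dagwiof{j+1}$).

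The key structural fact I would establish first, and which does the real work, is a ``DAG embedding'' claim: if $p \delsimby q$, then using a fixed winning Duplicator strategy $\strategdel$, every path in $\dagw$ starting from $(p,i)$ can be mapped to a path in $\dagw$ starting from $(q,i)$ over the same suffix word $\wordof{i:\omega}$, level by level, such that whenever the $p$-side path hits an accepting vertex at level $\ell$, the $q$-side path hits an accepting vertex at some level $\ell' \ge \ell$. This is just the delayed-simulation condition $\condof{\mathit{de}}{\cdot}$ applied to the traces, together with the fact (from the definition of $\dagw$) that a path in the DAG from $(p,i)$ is exactly a trace of $\A$ over $\wordof{i:\omega}$ from $p$. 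Lemma~\ref{lem:strind} guarantees that simulation is preserved along the chosen transitions, so the correspondence can be maintained indefinitely. From this embedding, ``finite'' transfers from $(q,i)$ down to $(p,i)$ is not immediate (the image might be a strict subgraph, which is good, but we need finiteness of the \emph{whole} reachable set from $(p,i)$, not just of one path); here I would instead observe that the subgraph reachable from $(p,i)$ embeds into the subgraph reachable from $(q,i)$ under the strategy-induced map, so if the latter is finite so is the former. For ``endangered'': if $(p,i)$ can reach an accepting vertex, then along that path the delayed condition forces the corresponding $q$-side path to also reach an accepting vertex, so $(q,i)$ is not endangered; contrapositively, $(q,i)$ endangered implies $(p,i)$ endangered.

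With the embedding claim in hand, the induction is routine: at stage $j$, if $(q,i)$ is finite in $\dagwiof j$ then $(p,i)$ is finite in $\dagwiof j$ (since, by the induction hypothesis, $(p,i)$ is still present in $\dagwiof j$ whenever $(q,i)$ is, and the reachable-subgraph embedding is inherited by the sub-DAG $\dagwiof j$), so $\rankwof{p,i} \le j = \rond j$; if $(q,i)$ survives to $\dagwiof{j+1}$ and is endangered there, then $(p,i)$ is endangered in $\dagwiof{j+1}$, so $\rankwof{p,i} \le j+1 \le j+2 = \rond{j+1}$. The one subtlety propagated through the induction is that removing vertices of rank $\le j-1$ from $\dagw$ to form $\dagwiof j$ must not break the embedding; this holds because the embedding maps $p$-reachable vertices to $q$-reachable vertices \emph{of no larger rank} (which is exactly the statement being proven, applied inductively at earlier levels), so vertices deleted on the $p$-side have their images already deleted on the $q$-side, and vice versa the surviving $p$-side vertices still map into surviving $q$-side vertices.

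The main obstacle I anticipate is making the ``reachable-subgraph embedding'' genuinely robust: a Duplicator strategy is defined per-transition, and different paths from $(p,i)$ may force Duplicator into different continuations, so one must be a little careful that the collection of Duplicator responses glues into a single well-defined subgraph map on the reachable set from $(q,i)$ — in general Duplicator's move depends on the whole history, but since we only need the level-wise rank inequality and not a canonical map, it is enough to handle each path separately and then take the union of images, noting that finiteness of the union of images is implied by finiteness of the $q$-reachable set. Getting the bookkeeping of ``at most one larger, hence the ceiling to the next even number'' exactly right — i.e. tracking that a single ``off-by-one'' in accepting-vertex timing can push $(p,i)$ from an even rank $2k$ up to at most the odd rank $2k+1$, which is then covered by $\rond{2k} = 2k \ge$ … wait, that's the wrong direction, so more precisely tracking that $q$ at rank $2k$ forces $p$ at rank $\le 2k = \rond{2k}$ while $q$ at rank $2k-1$ forces $p$ at rank $\le 2k = \rond{2k-1}$ — is the other place where care is needed, and is precisely why the ceiling-to-even operator $\rond{\cdot}$ appears in the statement.
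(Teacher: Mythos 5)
Your overall strategy---induction on the stage at which $(q,i)$ is removed in the ranking procedure, with Lemma~\ref{lem:strind} used to push the Duplicator strategy along paths of $\dagw$---is exactly the paper's approach. However, the inductive step, which is where all the difficulty of this lemma lives, is wrong as you state it, in two places. First, your odd case claims that if $(q,i)$ is endangered in $\dagwiof{j+1}$ then $(p,i)$ is endangered in $\dagwiof{j+1}$, giving $\rankwof{p,i}\leq j+1$. That would prove the stronger statement $\rankwof{p,i}\leq\rankwof{q,i}$, which is false and would make the $\rond{\cdot}$ in the lemma (and the paper's example of a macrostate with $f(p)=2$, $f(q)=1$ that survives delayed purging but not direct purging) pointless. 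Your justification---``the delayed condition forces the corresponding $q$-side path to also reach an accepting vertex''---is only valid in the full DAG $\dagwiof{0}$: in $\dagwiof{j+1}$ the $q$-side path produced by the strategy may have already left the truncated DAG before its (delayed) accepting vertex appears, so $(q,i)$ can be endangered in $\dagwiof{j+1}$ while $(p,i)$ still reaches an accepting vertex there. The correct claim is only that $(p,i)$ is then removed one stage later, as a \emph{finite} vertex of $\dagwiof{j+2}$, and establishing this requires the contradiction argument the paper uses: locate a level $x$ with $(p',x)$ on the surviving infinite path and $(q',x)$ on its strategy image, with $p'\delsimby q'$, such that $(q',x)$ was removed at some stage $e<j$ while $(p',x)$ is removed at some stage $f>j+1$, contradicting the induction hypothesis $f\leq\rond{e}$. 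You acknowledge the off-by-one in your last sentence but never supply the argument that actually delivers it.

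Second, in the even case you assert that the path embedding restricts to the sub-DAG $\dagwiof{j}$ because ``surviving $p$-side vertices still map into surviving $q$-side vertices,'' justified by the images having ``no larger rank.'' The induction hypothesis gives the opposite and weaker bound $\rankwof{p',x}\leq\rond{\rankwof{q',x}}$, which permits the image of a surviving $p$-side vertex to have rank one \emph{lower} and hence to have already been deleted; the embedding does \emph{not} restrict cleanly to $\dagwiof{j}$. The paper avoids needing this: it only compares ranks of vertices lying on an \emph{infinite} path from $(p,i)$ inside $\dagwiof{j}$ (whose ranks therefore exceed $j$) with their images (whose ranks are below $j$), which creates a gap of at least two and contradicts the induction hypothesis even after the $\rond{\cdot}$ slack. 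So the skeleton of your proof is the right one, but both halves of the inductive step need to be replaced by the contradiction-via-induction-hypothesis argument rather than a direct transfer of ``finite''/``endangered'' across the simulation.
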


\begin{proof}
  Consider some $(p,i)\in V$ and $(q,i)\in V$.
  First, suppose that~$\rankwof{q,i} = \omega$.
  Since the rank can be at most~$\omega$, it will always hold that $\rankwof{p, i}
  \leq \rond{\rankwof{q, i}}$.

  On the other hand, suppose that~$\rankwof{q, i}$ is finite, i.e.,
  $\wordof{i:\omega}$ is not accepted by~$q$.
  Then, due to Lemma~\ref{lem:rankings}, $0 \leq \rankwof{q, i} \leq 2n$.
  Because $p \delsimby q$, it holds that $\wordof{i:\omega}$ is also not
  accepted by~$p$, and therefore also~$0 \leq \rankwof{p, i} \leq 2n$.
  We now need to show that $0 \leq \rankwof{p, i} \leq \rond{\rankwof{q, i}} \leq
  2n$.

  \enlargethispage{3mm}

  Let $\{\dagw^k\}_{k=0}^{2n+1}$ be the sequence of run DAGs obtained from
  $\dagw$ in the ranking procedure from \secref{sec:rundag}. In the following
  text we use the abbreviation $v \in\dagw^m\setminus\dagw^n$ for $v \in\dagw^m
  \wedge v\notin \dagw^n$. Since the rank of a~node~$(r, j)$ is given as the
  number~$l$ s.t.~$(r, j) \in \dagw^l \setminus \dagw^{l+1}$, we will finish the
  proof of this lemma by proving the following claim:
  \vspace{-1mm}
  \begin{claim}
    Let $k$ and $l$ be s.t.~$(p, i) \in \dagw^k \setminus \dagw^{k+1}$
    and~$(q, i) \in \dagw^l \setminus \dagw^{l+1}$.
    Then $k \leq \rond{l}$.
  \end{claim}
  \vspace{-1mm}
  \begin{claimproofnoqed}
    We prove the claim by induction on~$l$.
    \begin{itemize}
  \vspace{-1mm}
      \item  Base case: ($l = 0$) Since we assume $\A$ is complete, no vertex
        in~$\dagw^0$ is finite.

        ($l = 1$) We prove that if $(q, i)$ is endangered in $\dagw^1$,
        then $(p, i)$ is endangered in $\dagw^1$ as well (so both
        would be removed in $\dagw^2$).
        For the sake of contradiction, assume that $(q,i)$ is endangered in
        $\dagw^1$ and $(p,i)$ is not.
        Therefore, since $\dagw^1$ contains no finite vertices,
        there is an infinite path~$\pi$ from $(p,i)$ s.t. $\pi$ contains at
        least one accepting state.
        In the following, we abuse notation and, given a~strategy~$\strategdel$
        and a~state $s \in Q$, use $\strategdel((s, i), \pi)$ to denote the path
        $(s_0, i)(s_1, i+1)(s_2, i+2) \ldots$ such that $s_0 = s$ and
        $\forall j \geq 0$, it holds that $s_{j+1} = \strategdel(s_j, r_{i+j}
        \ltr{\wordof{i+j}} r_{i+j+1})$ where $\pi_x = (r_x, x)$ for every $x
        \geq 0$.
        Since $p \delsimby q$, there is a~corresponding infinite path $\pi' =
        \strategdel((q,i), \pi)$ that also contains at least one accepting state.
        Therefore, $(q, i)$ is not endangered, a~contradiction to the
        assumption, so we conclude that $l=1 \limplies k=1$.
    \item Inductive step: We assume the claim holds for all $l < 2j$ and prove
        the inductive step for even and odd steps independently.

        ($l = 2j$)
        We prove that if $(q,i)$ is finite in~$\dagw^l$ (and therefore would
        be removed in~$\dagw^{l+1}$), then either $(p,i) \notin
        \dagw^l$, or $(p,i)$ is also finite in $\dagw^l$.
        For the sake of contradiction, we assume that $(q,i)$ is finite
        in~$\dagw^l$ and that $(p,i)$ is in~$\dagw^l$, but is not
        finite there (and, therefore, $k > l$).
        Since $(p,i)$ is not finite in~$\dagw^l$, there is an infinite path
        $\pi$ from $(p,i)$ in~$\dagw^l$.
        Because $p \delsimby q$, it follows that there is an infinite path
        $\pi' = \strategdel((q,i), \pi)$ in $\dagw^0$ ($\pi'$~is
        not in~$\dagw^l$ because $(q, i)$ is finite there).
        Using Lemma~\ref{lem:strind} (possibly multiple times) and the fact
        that $(q,i)$ is finite, we can find vertices $(p',x)$ in~$\pi$ and
        $(q',x)$ in~$\pi'$ s.t.~$p'\delsimby q'$ and $(q',x)$ is not in $\dagw^l$,
        therefore, $(q',x) \in \dagw^e \setminus \dagw^{e+1}$ for some $e < l$.
        Because $(p',x) \in \dagw^l$ and it is not finite ($\pi$ is infinite),
				it follows that $(p',x) \in
        \dagw^f \setminus \dagw^{f+1}$ for some $f > l$, and since
        $e < l < f$, we have that $f\not\leq e+1$, implying $f\not\leq\rond
        e$, which is in contradiction to the induction hypothesis.

        $(l = 2j+1)$
        We prove that if $(q,i)$ is endangered in~$\dagw^l$ (and therefore would
        be removed in~$\dagw^{l+1}$), then either $(p,i) \notin
        \dagw^l$, or $(p,i)$ is removed at the latest in $\dagw^{l+1}$.
        For the sake of contradiction, assume that $(q,i)$ is endangered
        in~$\dagw^l$ while $(p,i)$ is removed later than in $\dagw^{l+1}$.
        Therefore, since $\dagw^l$ contains no finite vertices (they were
        removed in the $(l-1)$-th step), there is an infinite path~$\pi$
        from $(p,i)$ s.t. $\pi$ contains at least one accepting state.
        Because $p \delsimby q$, there is a~corresponding path $\pi' =
        \strategdel((q,i), \pi)$ from $(q,i)$ in~$\dagw^0$ that also contains at
        least one accepting state and moreover $\pi' \notin \dagw^l$.
        Since $\pi'$ has an infinite number of
				states (and at least one accepting), not all states from $\pi'$ were removed
				in $\dagw^{l-1}$, i.e., there is at least one node with rank less or equal
				to $l-2$.
        Using Lemma~\ref{lem:strind} (also possibly multiple times) we can hence
        find states $(p',x)$ in~$\pi$ and $(q',x)$ in~$\pi'$ s.t.~$p'\delsimby q'$ and
        $(q',x)$ is not in $\dagw^l$ and has a~rank less or equal to $l-2$,
        therefore, $(q',x) \in \dagw^e \setminus \dagw^{e+1}$ for some $e <
        l-1$.
        Because $(p',x) \in \dagw^l$, it follows that $(p',x) \in
        \dagw^f \setminus \dagw^{f+1}$ for some $f \geq l$, and,
        therefore, $f\not\leq e + 1$, which is in contradiction to the induction
        hypothesis.
        \claimqed
    \end{itemize}
  \end{claimproofnoqed}
  This concludes the proof.
  \qed
\end{proof}

\vspace{-1mm}
\begin{lemma}\label{lem:simrundir}
  Let $p,q \in Q$ such that $p \dirsimby q$
	and $\dagw = (V, E)$ be the run DAG of $\A$ over~$\word$.
  For all $i \geq 0$, it holds that $(p, i) \in V \wedge (q, i) \in V \limplies
  \rankwof{p, i} \leq \rankwof{q, i}$.
\end{lemma}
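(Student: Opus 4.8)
The plan is to mirror the proof of Lemma~\ref{lem:simrundag}, taking advantage of the fact that direct simulation is strictly stronger than delayed simulation; this is precisely what lets the bound improve from $\rond{\rankwof{q,i}}$ to $\rankwof{q,i}$. First I would dispose of the easy case exactly as there: if $\rankwof{q,i} = \omega$ there is nothing to prove since $\rankwof{p,i} \leq \omega$ always, and otherwise $\rankwof{q,i}$ is finite, i.e.\ $\wordof{i:\omega}$ is not accepted from~$q$; since ${\dirsimby} \subseteq {\subseteqlang}$ the suffix $\wordof{i:\omega}$ is not accepted from~$p$ either, so by Lemma~\ref{lem:rankings} we get $0 \leq \rankwof{p,i} \leq 2n$ and $0 \leq \rankwof{q,i} \leq 2n$. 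Writing $(p,i) \in \dagwiof{k} \setminus \dagwiof{k+1}$ and $(q,i) \in \dagwiof{l} \setminus \dagwiof{l+1}$, it then suffices to prove the claim $k \leq l$, which I would establish by induction on~$l$, following the same case split (base $l \in \{0,1\}$, inductive step split into even $l = 2j$ and odd $l = 2j+1$) as in the proof of Lemma~\ref{lem:simrundag}, and using Lemma~\ref{lem:strind} to propagate~$\dirsimby$ along paths.

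The crucial difference from the delayed case is a \emph{level-preserving} version of the path-mapping argument: whenever $\pi$ is a path out of $(p,i)$ that visits an accepting vertex at some level~$y$, the response $\pi' = \strategdir((q,i), \pi)$ of a winning Duplicator strategy visits an accepting vertex \emph{at the very same level}~$y$, because the direct-simulation winning condition requires $p_y \in \acc \limplies q_y \in \acc$ at \emph{every} level rather than merely eventually. With this, the base case $l=1$ goes through as before (if $(q,i)$ were not endangered-free while $(p,i)$ were, the infinite accepting path out of $(p,i)$ in $\dagwiof 1$ would map to an infinite accepting path out of $(q,i)$, contradicting that $(q,i)$ is endangered), and so does the even step: from an infinite path $\pi$ out of $(p,i)$ inside $\dagwiof l$ we obtain an infinite $\pi' = \strategdir((q,i),\pi)$ which, since $(q,i)$ is finite in $\dagwiof l$, must leave $\dagwiof l$ at some vertex $(q',x)$ of rank $e < l$; the level-$x$ vertex $(p',x)$ on $\pi$ lies on an infinite path inside $\dagwiof l$, hence has rank $f > l > e$, and satisfies $p' \dirsimby q'$ by (iterated) Lemma~\ref{lem:strind}, contradicting the induction hypothesis $f \leq e$.

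The step I expect to be the main obstacle is the odd inductive case $l = 2j+1$, for the same reason as in Lemma~\ref{lem:simrundag}: to apply the induction hypothesis one needs a vertex on $\pi'$ of rank at most $l-2$, not merely of rank $< l$. Assuming for contradiction that $(q,i)$ is endangered in $\dagwiof l$ while $(p,i)$ still belongs to $\dagwiof{l+1}$, I would take an infinite path $\pi$ out of $(p,i)$ inside $\dagwiof l$ that passes through an accepting vertex (this exists because $\dagwiof l$ contains no finite vertices and $(p,i)$ is not endangered there), map it to $\pi' = \strategdir((q,i),\pi)$, which by the level-preservation remark also passes through an accepting vertex; since $(q,i)$ is endangered in $\dagwiof l$, $\pi'$ cannot stay inside $\dagwiof l$ until that accepting vertex and so leaves $\dagwiof l$. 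If the first vertex $(q_0,z)$ of $\pi'$ outside $\dagwiof l$ has rank exactly $l-1$, then it is finite in $\dagwiof{l-1}$, so the infinite suffix of $\pi'$ from $(q_0,z)$ must itself leave $\dagwiof{l-1}$, yielding a later vertex $(q',x)$ of rank $\leq l-2$; otherwise $(q_0,z)$ already has rank $\leq l-2$ and we take it as $(q',x)$.

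Finally, the matching vertex $(p',x)$ on $\pi$ lies on an infinite path inside $\dagwiof l$, hence has rank $f \geq l$, and satisfies $p' \dirsimby q'$ by Lemma~\ref{lem:strind}; the induction hypothesis for $\rankwof{q',x} \leq l-2 < l$ then forces $f \leq \rankwof{q',x} \leq l-2 < l \leq f$, a contradiction. This proves $k \leq l$, which, unwinding the notation, is exactly $\rankwof{p,i} \leq \rankwof{q,i}$, completing the proof. In short, the whole argument is the proof of Lemma~\ref{lem:simrundag} with $\delsimby$ replaced by $\dirsimby$, with the $\rond{\cdot}$ operator removed throughout, and with "eventually accepting" strengthened to "accepting at the same level"; I do not foresee any genuinely new difficulty beyond reusing the rank-$(l-1)$ bookkeeping of that proof.
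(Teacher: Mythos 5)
Your proof is correct and follows exactly the route the paper intends: the paper's own proof of this lemma is literally the single sentence that it ``can be obtained as a simplified version of the proof of Lemma~\ref{lem:simrundag}'', and your elaboration (same induction on~$l$, same base/even/odd case split, with $\rond{\cdot}$ dropped and ``eventually accepting'' strengthened to ``accepting at the same level'') is a faithful instantiation of that. One minor remark: in the odd case your rank-$(l-1)$ bookkeeping is not actually needed for direct simulation, since the induction hypothesis already gives $f \leq e \leq l-1 < l \leq f$ for the first vertex of $\pi'$ leaving $\dagwiof{l}$, without descending to a vertex of rank $\leq l-2$.
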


\vspace{-1mm}
\begin{proof}
Can be obtained as a~simplified version of the proof of
Lemma~\ref{lem:simrundag}.
\qed
\end{proof}

\noindent
We are now ready to prove Lemma~\ref{lem:optrankdel}.

\vspace{-1mm}
\lemmaOptRankDel*

\vspace{-1mm}
\begin{proof}
	\begin{itemize}

    \item[$(\subseteq)$]  Follows directly from the fact that~$\BSvenRankdel$ is
      obtained by removing states from~$\BSven$.

    \item[$(\supseteq)$]
      Let $\word \in \langof \BSven$.
			As shown in the proof of Lemma~3.2 in~\cite{Schewe09}, there are two
			cases.
      The first case is when all vertices of $\dagof \word$ are finite, which we
      do not need to consider, since we assume complete automata.

			The other case is when $\dagof \word$ contains an infinite
			vertex.
			In this case, $\BSven$ contains an accepting run $$\rho = S_0 S_1 \ldots S_p
			(S_{p+1}, O_{p+1}, f_{p+1}, i_{p+1}) (S_{p+2}, O_{p+2}, f_{p+2}, i_{p+2})
			\ldots$$ with
      \begin{itemize}
        \item  $S_0 = I, O_{p+1} = \emptyset$, and $i_{p+1} = 0$,
        \item  $S_{j+1} = \trans(S_j, \wordof j)$ for all $j \in \omega$,
      \end{itemize}
      and, for all $j > p$,
      \begin{itemize}
        \item  $O_{j+1} = f^{-1}_{j+1}(i_{j+1})$ if $O_j = \emptyset$ or \\
              $O_{j+1} = \trans(O_j, \wordof j) \cap f^{-1}_{j+1}(i_{j+1})$
              if $O_j \neq \emptyset$, respectively,
        \item  $f_j$ is the $S_j$-tight level ranking that maps each~$q \in
              S_j$ to the rank of $(q,j) \in \dagw$,
        \item  $i_{j+1} = i_j$ if $O_j \neq \emptyset$ or \\
              $i_{j+1} = (i_j + 2) \mod (\rankof f + 1)$ if $O_j =
              \emptyset$, respectively.
      \end{itemize}
      The ranks assigned by~$f_j$ to states of~$S_j$ match the ranks of
      the corresponding vertices in~$\dagof \word$.

				$\circledast$ Using Lemma~\ref{lem:simrundag}, we conclude
				that $\rho$ contains no macrostate $(S,O,f,j)$ where $f(p)
				> \rond{f(q)}$ and $p\delsimby q$ for $p, q \in S$. Therefore, $\rho$~is
				also an accepting run in~$\BSvenRankdel$.
        (We use $\circledast$ to refer to this paragraph later.)
        \qed
	\end{itemize}
\end{proof}

\lemmaOptRankDir*
\begin{proof}
The same as for Lemma~\ref{lem:optrankdel} with $\circledast$ substituted by the
following:

\noindent
$\circledast$ Using Lemma~\ref{lem:simrundir}, we conclude that
$\rho$~contains no macrostate $(S,O,f,j)$ where $f(p) > f(q)$
and $p\dirsimby q$ for $p, q \in S$. So $\rho$~is also an
accepting run in~$\BSvenRankdir$.\qed
\end{proof}

\lemmaOptRankDirdel*
\begin{proof}
The same as for Lemma~\ref{lem:optrankdel} with $\circledast$ substituted by the
following:

\noindent
$\circledast$ Using Lemmas~\ref{lem:simrundir}
and~\ref{lem:simrundag}, we conclude that $\rho$~contains no macrostate
$(S,O,f,j)$ where either $f(p) > f(q)$ and $p\dirsimby q$, or $f(p) >
\rond{f(q)}$ and $p\delsimby q$ for $p, q \in S$. Therefore, $\rho$~is also
an accepting run in~$\BSvenRankdirdel$.
\qed
\end{proof}

\vspace{-3.0mm}
\section{Saturation of Macrostates}\label{sec:saturation}
\vspace{-2.0mm}

Our second optimisation is inspired by an optimisation of determinisation of
classical finite automata from~\cite[Section~5]{GlabbeekP08}.
Their optimisation is based on saturating every constructed macrostate in the
classical subset construction with all direct-simulation-smaller states.
This can reduce the total number of states of the determinized automaton because
two or more macrostates can be mapped to a~single saturated macrostate.
(In Section~\ref{sec:compression}, we show why an analogue of their
\emph{compression} cannot be used.)

We show that a~similar technique can be applied to BAs.
We do not restrain ourselves to direct simulation, though, and generalize the
technique to delayed simulation.
In particular, in our optimisation, we saturate the $S$~components of
macrostates $(S, O, f, i)$ obtained in \CompSchewe with all $\delsimby$-smaller
states.
Formally, we modify \CompSchewe by substituting the definition of the
constructed transition function~$\delta'$ with~$\deltasatdown'$ defined as
follows:
\begin{itemize}
  \item  $\deltasatdown' = \deltasatup_1 \cup \deltasatup_2 \cup \deltasatup_3$
    where
    \begin{itemize}
      \item  $\deltasatup_1: Q_1 \times \Sigma \to 2^{Q_1}$ with
        $\deltasatup_1(S, a) = \{\str{\delta(S,a)}\}$,
      \item $\deltasatup_2 : Q_1 \times \Sigma \to 2^{Q_2}$ with
        $\deltasatup_2(S, a) = \{(S', \emptyset, f, 0) \mid S' = \str{\delta(S,
        a)}\}$, and
      \item $\deltasatup_3: Q_2 \times \Sigma \to 2^{Q_2}$ with $(S', O', f',
        i') \in \deltasatup_3((S, O, f, i), a)$ iff
        $S' = \str{\delta(S, a)}, f' \rankleq~f$, $\rankof f = \rankof{f'}$, and
        \begin{itemize}
          \item  $i' = (i+2) \mod (\rankof{f'} + 1)$ and $O' = f'^{-1}(i')$ if
            $O = \emptyset$ or
          \item  $i' = i$ and $O' = \delta(O, a) \cap f'^{-1}(i)$ if $O \neq
            \emptyset$,
        \end{itemize}
    \end{itemize}
\end{itemize}
where $\str{S} = \{ q\in Q ~|~\exists s \in S: q\delsimby s \}$.
We denote the obtained procedure as \Saturate and the obtained BA
as~$\BSvenSat$.

\begin{restatable}{lemma}{lemmaSaturation}
  \label{lem:saturation}
  $\langof{\BSvenSat} = \langof{\BSven}$
\end{restatable}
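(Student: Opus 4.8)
The plan is to prove the two inclusions separately, exactly as in the proof of Lemma~\ref{lem:optrankdel}, with the harder direction being~$(\supseteq)$.

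For~$(\subseteq)$, the key observation is that saturating the $S$-components only \emph{adds} states, and each added state~$q$ satisfies $q \delsimby s$ for some $s$ already present; since $\delsimby \subseteq \subseteqlang$, saturation does not change the language recognised from any macrostate viewed as a set of states of~$\A$. More carefully, I would show that if $\word \in \langof{\BSvenSat}$ then the accepting run of~$\BSvenSat$ on~$\word$ can be used to witness $\word \notin \langof \A$: the $S$-components still correctly over-approximate $\delta(I, \word_{0:j})$ (they contain it, since $S \subseteq \str S$), and the $f$, $O$, $i$ machinery forces every trace of~$\A$ to be eventually non-accepting just as in~$\BSven$. Here one uses that $\str{\cdot}$ is monotone and that $\str{\delta(\str S, a)} = \str{\delta(S,a)}$ — which follows because $\delsimby$ is a simulation (Lemma~\ref{lem:strind}): every state reachable from a $\delsimby$-smaller state is $\delsimby$-below a state reachable from the original, so closing before or after applying~$\delta$ yields the same set. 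Thus $\langof{\BSvenSat} \subseteq \overline{\langof \A} = \langof \BSven$ by Proposition~\ref{prop:sven_correct}.

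For~$(\supseteq)$, let $\word \in \langof \BSven = \overline{\langof \A}$. As in the proof of Lemma~\ref{lem:optrankdel}, since~$\A$ is complete the run DAG $\dagw$ has an infinite vertex, and there is an accepting run $\rho = S_0 \cdots S_p (S_{p+1}, O_{p+1}, f_{p+1}, i_{p+1}) \cdots$ of~$\BSven$ of the canonical form described there, where $f_j$ maps each $q \in S_j$ to $\rankwof{q,j}$. I would now \emph{lift} $\rho$ to a run $\rho'$ of~$\BSvenSat$ by replacing each $S_j$ with $\str{S_j}$ and extending each $f_j$ to the larger set $\str{S_j}$: for $q \in \str{S_j} \setminus S_j$, pick $s \in S_j$ with $q \delsimby s$ and set $f'_j(q) := \min\{\rond{f_j(s)} \mid s \in S_j,\ q \delsimby s\}$ (using $\rond{\cdot}$ to keep accepting states even; alternatively $f'_j(q)$ can be defined directly as $\rond{\rankwof{q,j}}$ when $(q,j) \in \dagw$, and as $\rond{\min_s f_j(s)}$ otherwise). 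The points to check are: (i)~$f'_j$ is $\str{S_j}$-tight with the same odd rank as $f_j$ — tightness of the image is inherited since $S_j \subseteq \str{S_j}$ and new ranks lie within the existing range, and evenness on accepting states is guaranteed by $\rond{\cdot}$; (ii)~the transition constraint $f'_{j+1} \rankleq f'_j$ over $\str{S_j}$ holds — for $q \in \str{S_j}$ with $q \delsimby s$, $s \in S_j$, and $q' \in \delta(q, \wordof j)$, Lemma~\ref{lem:strind} gives $s' \in \delta(s, \wordof j) \subseteq \delta(S_j,\wordof j)$ with $q' \delsimby s'$, hence $q' \in \str{S_{j+1}}$ and $f'_{j+1}(q') \leq \rond{f'_{j+1}(s')} \leq \ldots \leq f'_j(q)$, using the relation between ranks of $\delsimby$-related DAG vertices from Lemma~\ref{lem:simrundag} together with $f'_j(q) = \rond{\rankwof{q,j}}$; (iii)~the $O$- and $i$-components can be updated by the deterministic rules of $\deltasatup_3$, and the resulting run is accepting because $O$ still becomes empty infinitely often — this is inherited from $\rho$ since the $O$-update in $\BSvenSat$ follows the same recurrence and the relevant $f'^{-1}(i')$ sets are reached infinitely often.

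The main obstacle is step~(ii): showing that the extended level rankings $f'_j$ are \emph{consistent} along transitions of~$\BSvenSat$, i.e.\ $f'_{j+1} \rankleq f'_j$, and simultaneously that they are correctly tight with a \emph{constant} rank. The delicate point is that a newly-added state $q \in \str{S_j}$ may be $\delsimby$-below several states of $S_j$, and one must pick the rank coherently along the whole run so that Duplicator-mimicking traces line up; defining $f'_j(q)$ globally via the run DAG (as $\rond{\rankwof{q,j}}$ whenever $(q,j)$ lies in $\dagw$, and by the minimum over simulating states otherwise) sidesteps the coherence issue, and then the inequality $f'_{j+1} \rankleq f'_j$ reduces cleanly to Lemma~\ref{lem:simrundag} and the observation that $\rond{\cdot}$ is monotone and idempotent. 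Everything else — the $(\subseteq)$ direction and the bookkeeping for $O$ and $i$ — is routine once the $\str{\delta(\str S,a)} = \str{\delta(S,a)}$ identity is in hand.
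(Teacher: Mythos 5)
Your $(\subseteq)$ direction is essentially the paper's (the paper makes it concrete by tracking an accepting run of~$\A$ through the macrostates and observing that its ranks form a non-increasing sequence stabilising at an even value, so the $O$-component can never be emptied again), and your identity $\str{\delta(\str{S},a)} = \str{\delta(S,a)}$ is correct. The gap is in $(\supseteq)$, exactly at the point you flag as ``the main obstacle'': assigning ranks to the newly added states by fiat from the run DAG of~$\A$ does not work. Two concrete failures. First, tightness and constant rank: if $q \in \str{S_j}\setminus S_j$ is $\delsimby$-below only states~$s$ with $f_j(s) = r = \rankof{f_j}$ (odd), your min-formula gives $f'_j(q) = \rond{r} = r+1$ (and the run-DAG variant can likewise yield $\rond{\rankwof{q,j}} = r+1$), so $\rankof{f'_j} = r+1$ is even, $f'_j$ is not tight, the macrostate is not in~$Q_2$, and the side condition $\rankof{f'} = \rankof{f}$ of~$\deltasatup_3$ is violated. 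Second, and fatally, the $O$-component: a newly added state~$q$ may have an infinite trace in~$\dagw$ that never reaches an accepting vertex (note that $q \delsimby s$ holds vacuously for any~$q$ from which~$F$ is unreachable), so every vertex of that trace carries the same odd rank $2k+1$; your $\rond{\cdot}$ places the entire trace on the even rank $2k+2$ forever, and once the $i$-component cycles to $2k+2$ the set~$O$ is loaded with these states and never emptied again, so the constructed run of~$\BSvenSat$ is \emph{not} accepting. Your point~(iii) (``inherited from~$\rho$'') is therefore false: the $O$-sets of~$\BSvenSat$ are computed over the saturated $f'^{-1}(i)$, which contains the new states.

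The missing idea is that saturating macrostates amounts to running \CompSchewe on the transition-saturated automaton $\A'$ obtained by adding $r\ltr{a}p$ whenever $r \ltr{a} q \in \delta$ and $p \delsimby q$: the levels of the run DAG of~$\word$ in~$\A'$ are exactly the sets $\str{S_j}$, and its own rank assignment is automatically tight, non-increasing, and empties~$O$ infinitely often by the standard argument of~\cite{Schewe09} --- \emph{provided} $\langof{\A'} = \langof{\A}$. That language-preservation claim is the real content of the lemma and is not free: it is established via Lemma~\ref{lem:sim-lang} (a word cycling infinitely often through an accepting $p$-to-$q$ path is accepted from~$q$ when $p \delsimby q$) together with Lemmas~\ref{lem:saturation_preserves_lang} and~\ref{lem:saturation_preserves_sim}, which let the transition additions be iterated without breaking the simulation. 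Your proposal contains no substitute for this argument, so the $(\supseteq)$ direction does not go through as written.
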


\noindent
Obviously, as direct simulation is stronger than delayed simulation, the
previous technique can also use direct simulation only (e.g., when computing
the full delayed simulation is computationally too demanding).
Moreover, \Saturate is also compatible with all \PurgeX algorithms for $x \in
\{\mathit{di}, \mathit{de}, \mathit{di+de}\}$ (because they just remove
macrostates with incompatible rankings from~$Q_2$)---we call the combined
versions \PurgeX\!\!\!+\Saturate and the complement BAs they
output~$\BSvenSatX$.

\begin{restatable}{lemma}{propositionSatPurge}
	\label{lem:satpurge}
  $\langof \BSvenSatdi = \langof \BSvenSatde = \langof \BSvenSatdide = \langof
  \BSven$
\end{restatable}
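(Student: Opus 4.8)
The plan is to prove Lemma~\ref{lem:satpurge} by combining the language-preservation arguments for the \PurgeX algorithms (Lemmas~\ref{lem:optrankdir}, \ref{lem:optrankdel}, \ref{lem:optrankdirdel}) with the one for \Saturate (Lemma~\ref{lem:saturation}). The two kinds of optimisation are essentially independent: \PurgeX only deletes macrostates from~$Q_2$ whose level ranking is ``incompatible'' with a~simulation, whereas \Saturate changes the transition function by replacing each reached set of states~$S'$ with its $\delsimby$-downward closure~$\str{S'}$. I would therefore follow the same two-direction template used in the proofs above.

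For the inclusion $\langof{\BSvenSatX} \subseteq \langof{\BSven}$ I would argue abstractly: $\BSvenSatX$ is obtained from $\BSvenSat$ by deleting a~subset of its macrostates (exactly those in $Q_2$ satisfying $\toremdir$, $\toremdel$, or their disjunction), and $\langof{\BSvenSat} = \langof{\BSven}$ by Lemma~\ref{lem:saturation}; removing states never enlarges the language, so $\langof{\BSvenSatX} \subseteq \langof{\BSvenSat} = \langof{\BSven}$. For the converse $\langof{\BSven} \subseteq \langof{\BSvenSatX}$ I would take $\word \in \langof{\BSven}$ and reuse the run constructed in the proof of Lemma~\ref{lem:optrankdel}: by Lemma~\ref{lem:rankings} there is an infinite vertex in $\dagof\word$, hence an accepting run $\rho = S_0 S_1 \ldots S_p (S_{p+1}, O_{p+1}, f_{p+1}, i_{p+1})\ldots$ in $\BSven$ where each $f_j$ maps $q \in S_j$ to $\rankwof{q,j}$. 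The key point is that in the construction of $\BSvenSat$ the same run can be replayed with each $S_j$ replaced by $\str{S_j}$ --- this is precisely what the proof of Lemma~\ref{lem:saturation} establishes, and the saturated run is again accepting in $\BSvenSat$. Then, exactly as in paragraph~$\circledast$ of the proof of Lemma~\ref{lem:optrankdel}, I invoke Lemmas~\ref{lem:simrundir} and~\ref{lem:simrundag} to conclude that every macrostate $(S, O, f, j)$ appearing in this run satisfies neither $\toremdir$ nor $\toremdel$, hence the run survives into $\BSvenSatX$ for each $x \in \{\mathit{di}, \mathit{de}, \mathit{di+de}\}$.

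The one subtlety --- and the main obstacle --- is that after saturation the $S$~component is $\str{S_j}$ rather than $S_j$, so the level ranking $f_j$ is now defined on a~larger set, and I must check that Lemmas~\ref{lem:simrundir}/\ref{lem:simrundag} still rule out $\toremdir$/$\toremdel$ for all $p, q \in \str{S_j}$, not merely $p, q \in S_j$. This requires that the level ranking on the saturated set still assigns to each $r \in \str{S_j}$ the rank $\rankwof{r,j}$ (equivalently, that the saturation proof extends $f_j$ in exactly this way); given that, the two lemmas apply verbatim, since $(r,j) \in \dagw$ for every $r$ that genuinely has that rank, and the vertex-comparison inequalities hold whenever both vertices lie in $V$. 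I would therefore state explicitly, citing the construction in the proof of Lemma~\ref{lem:saturation}, that the $f_j$ used there is the $\str{S_j}$-tight ranking sending each state to the rank of its run-DAG vertex, so that the simulation-based ranking inequalities transfer unchanged to the saturated macrostates.

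Concretely, the proof would read as follows.

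\begin{proof}
For each $x \in \{\mathit{di}, \mathit{de}, \mathit{di+de}\}$ we show $\langof{\BSvenSatX} = \langof{\BSven}$.

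\noindent$(\subseteq)$ The automaton $\BSvenSatX$ is obtained from $\BSvenSat$ by removing the macrostates of $Q_2$ satisfying $\toremdir$, $\toremdel$, or $\toremdir \lor \toremdel$, respectively. Removing states cannot enlarge the language, so $\langof{\BSvenSatX} \subseteq \langof{\BSvenSat}$, and $\langof{\BSvenSat} = \langof{\BSven}$ by Lemma~\ref{lem:saturation}.

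\noindent$(\supseteq)$ Let $\word \in \langof{\BSven}$. As in the proof of Lemma~\ref{lem:optrankdel}, $\dagof\word$ contains an infinite vertex, and by the argument in the proof of Lemma~\ref{lem:saturation}, $\BSvenSat$ admits an accepting run
\[
  \rho = S'_0 S'_1 \ldots S'_p (S'_{p+1}, O_{p+1}, f_{p+1}, i_{p+1}) (S'_{p+2}, O_{p+2}, f_{p+2}, i_{p+2}) \ldots
\]
where $S'_j = \str{S_j}$ for the sets $S_j$ of the corresponding $\BSven$-run, and each $f_j$ is the $S'_j$-tight level ranking mapping every $q \in S'_j$ to the rank of $(q,j)$ in $\dagof\word$. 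Consider any macrostate $(S'_j, O_j, f_j, i_j)$ of $\rho$ and any $p, q \in S'_j$. If $p \dirsimby q$ then, by Lemma~\ref{lem:simrundir}, $f_j(p) = \rankwof{p,j} \leq \rankwof{q,j} = f_j(q)$, so $\toremdirof{S'_j, O_j, f_j, i_j}$ fails. If $p \delsimby q$ then, by Lemma~\ref{lem:simrundag}, $f_j(p) = \rankwof{p,j} \leq \rond{\rankwof{q,j}} = \rond{f_j(q)}$, so $\toremdelof{S'_j, O_j, f_j, i_j}$ fails. Hence $\rho$ visits no macrostate purged by \PurgeDi, \PurgeDe, or \PurgeDiDe, and is therefore an accepting run of $\BSvenSatdi$, $\BSvenSatde$, and $\BSvenSatdide$. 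Thus $\word \in \langof{\BSvenSatX}$.
\qed
\end{proof}
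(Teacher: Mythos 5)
Your $(\subseteq)$ direction is fine and in fact slightly cleaner than the paper's (the paper re-runs the argument of Lemma~\ref{lem:saturation}, whereas you just observe that \PurgeX only deletes states from $\BSvenSat$). The $(\supseteq)$ direction, however, has a genuine gap exactly at the point you flag as ``the one subtlety'': you resolve it by declaring that $f_j$ maps every $q \in \str{S_j}$ to $\rankwof{q,j}$ in the run DAG $\dagof{\word}$ of~$\A$, but a state $r \in \str{S_j} \setminus S_j$ added by saturation need not be reachable in~$\A$ from~$I$ over $\wordof{0}\cdots\wordof{j-1}$, so the vertex $(r,j)$ need not exist in $\dagof{\word}$ at all. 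For such~$r$ the quantity $\rankwof{r,j}$ is undefined, your proposed $f_j$ is not well defined, and Lemmas~\ref{lem:simrundir} and~\ref{lem:simrundag} --- whose hypotheses require both $(p,i)$ and $(q,i)$ to be vertices of the DAG --- simply do not apply to pairs involving~$r$. Yet the purging predicates $\toremdir$ and $\toremdel$ quantify over \emph{all} pairs in the saturated $S$-component, so these are precisely the pairs you must handle.

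The paper closes this gap by first passing to the automaton $\A'$ obtained from $\A$ by adding the transitions of Lemma~\ref{lem:saturation_preserves_sim}. In $\A'$ the saturated states do occur at the corresponding levels of the run DAG of~$\word$, so their ranks are defined; $\langof{\A'} = \langof{\A}$ (Lemma~\ref{lem:saturation_preserves_lang}) guarantees that these ranks still obey Lemma~\ref{lem:rankings}, and Lemma~\ref{lem:saturation_preserves_sim} guarantees that the simulation computed on~$\A$ (the one used in the purging predicates) is still a simulation on~$\A'$, so Lemmas~\ref{lem:simrundir} and~\ref{lem:simrundag} can be applied to the run DAG of~$\A'$ to refute $\toremdir$ and $\toremdel$ on every macrostate of the mimicking run. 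You would need to incorporate this detour through~$\A'$ (or supply some other well-defined ranking for the saturation-added states together with a proof that it respects the simulation inequalities) for your argument to go through.
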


\vspace{-0.0mm}
\subsection{Proofs of Lemmas~\ref{lem:saturation} and~\ref{lem:satpurge}}\label{sec:label}
\vspace{-0.0mm}

We start with a~lemma, used later, that talks about languages of states related
by delayed simulation when there is a~path between them.

\begin{lemma}\label{lem:sim-lang}
  For $p,q \in Q$ such that $p \delsimby q$, let $L_\top = \{ w\in\Sigma^*~|~
  p\ftransover w q\}$ and $L_\bot = \{ w\in\Sigma^*~|~ p\transover w
  q\}$.
  Then $L(q) \supseteq (L_\bot^*L_\top)^\omega$.
\end{lemma}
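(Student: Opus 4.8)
The plan is to establish the only nontrivial inclusion, $(L_\bot^* L_\top)^\omega \subseteq L(q)$, by constructing, for every $\word \in (L_\bot^* L_\top)^\omega$, an accepting run of $\A$ from $q$ on $\word$. First I would fix a factorization $\word = x_1 x_2 \cdots$ into finite pieces such that each $x_k \in L_\bot$ and infinitely many $x_k$ lie in $L_\top$; this is possible because $L_\top \subseteq L_\bot$ and every block of $(L_\bot^* L_\top)^\omega$ ends with an $L_\top$-factor, so the $L_\top$-pieces recur infinitely often. By definition of $L_\bot$ and $L_\top$, each piece $x_k$ carries a witnessing path $p \transover{x_k} q$ in $\A$, and this path visits an accepting state whenever $x_k \in L_\top$. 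Note that the pieces do not chain in $\A$ (they all go from $p$ to $q$, not from $q$), so the simulation is what will let me re-root from $q$ back to $p$ at every piece.

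The run from $q$ is then built piece by piece while maintaining the invariant $p \delsimby q_k$, where $q_k$ is the state reached after the first $k$ pieces. I start with $q_0 = q$, for which $p \delsimby q_0$ is exactly the hypothesis. Given $q_k$ with $p \delsimby q_k$, I take a winning Duplicator strategy $\strategdel$ from $(p, q_k)$ and let Spoiler play the witnessing path $p \transover{x_{k+1}} q$; Duplicator's response is a path $q_k \transover{x_{k+1}} q_{k+1}$ in $\A$. Iterating Lemma~\ref{lem:strind} along this play shows that the endpoints remain in the simulation, i.e.\ $q \delsimby q_{k+1}$ (the Spoiler play ends in $q$); since $\delsimby$ is a preorder, composing with $p \delsimby q$ re-establishes $p \delsimby q_{k+1}$, closing the induction. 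Concatenating the segments yields a single infinite run $\run = q_0 \transover{x_1} q_1 \transover{x_2} \cdots$ of $\A$ from $q$ over $\word$.

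It remains to argue that $\run$ is accepting, and this is where I expect the real work. For every $k$ with $x_{k+1} \in L_\top$ the Spoiler path mimicked in step $k$ visits an accepting state, so by the delayed condition $\cond{\mathit{de}}$ Duplicator's response must visit an accepting state \emph{at that position or later}. The main obstacle is precisely this ``or later'': $\cond{\mathit{de}}$ only promises a future Duplicator-accepting vertex, and since the construction restarts the Spoiler play in $p$ at the beginning of every piece, $\run$ is a patchwork of distinct segment games; a naive reading would let a deferred obligation be discharged in the continuation of a segment game that $\run$ does not actually follow, and the obligation would be lost at the restart (for \emph{direct} simulation this problem disappears, since by Lemma~\ref{lem:simrundir} the accepting visit occurs at the same position, inside the piece). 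I would close the gap using that the delayed simulation game is a finite-state parity game: once an accepting obligation is pending within Duplicator's winning region, Duplicator can force an accepting vertex within a bounded number of rounds by an attractor argument, and this bound is independent of Spoiler's subsequent moves, hence of where and how the Spoiler play is restarted. Consequently each of the infinitely many obligations created by the $L_\top$-pieces is discharged after boundedly many further rounds of $\run$, so $\run$ visits accepting states infinitely often and $\word \in L(q)$. The delicate point is to interleave this bounded-discharge play with the piece-by-piece construction so that the invariant $p \delsimby q_k$ is recovered at the boundaries used for the next mimicking step; this bookkeeping, forced by the gap between delayed and direct simulation, is the step I expect to be hardest.
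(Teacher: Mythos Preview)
Your setup and your diagnosis of the difficulty are both correct, but the attractor/bounded-discharge argument does not close the gap you identify. The bound~$B$ is a property of the delayed-simulation \emph{game}: from a winning position with a pending obligation, Duplicator forces discharge within~$B$ rounds against any \emph{valid} Spoiler continuation, i.e., one that is a trace in~$\A$. In your construction the sequence of Spoiler moves that generates~$\run$ is $p \transover{x_1} q,\ p \transover{x_2} q,\ \ldots$, with a jump from~$q$ back to~$p$ at every boundary; this is not a trace, so the bound does not transfer. Concretely, take $Q=\{p,q\}$, $F=\{p\}$, $\Sigma=\{a\}$, and $\delta=\{p\ltr a q,\ q\ltr a p,\ q\ltr a q\}$. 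Then $p\delsimby q$ and $a\in L_\top$. The memoryless strategy given by $\strategdel(q, p\ltr a q)=q$ and $\strategdel(q, q\ltr a p)=\strategdel(q, q\ltr a q)=p$ (with the forced $\strategdel(p,\cdot)=q$) is winning: against any trace from~$p$ it discharges the initial obligation at the second step, when Spoiler is necessarily at~$q$. Yet with $x_k=a$ for all~$k$ your piecewise construction feeds Duplicator only the transition $p\ltr a q$ at every step and produces $\run=q\,q\,q\cdots$, which never visits~$F$. So for an arbitrary winning~$\strategdel$ the piecewise~$\run$ need not be accepting, and the interleaving you flag as ``hardest'' is in fact where the argument breaks.

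The paper sidesteps this by not restarting Spoiler at every segment. Its key device is a Claim: from any $w_0w_1\cdots$ with $w_i\in L_\top\cup L_\bot$ one can build a \emph{single valid} trace $p\transover{w_0}q=q_0\transover{w_1}q_1\transover{w_2}\cdots$ in~$\A$ with $p\delsimby q_i$ for every~$i$ (each $q_{i+1}$ arises by letting the witness path $p\transover{w_{i+1}}q$ play Spoiler against~$q_i$, using $p\delsimby q_i$; this gives $q\delsimby q_{i+1}$ and hence $p\delsimby q_{i+1}$). Given a current Duplicator state~$r$ with $p\delsimby r$, the paper lets Spoiler play such an \emph{infinite} trace---choosing the accepting $L_\top$ witness as its first segment---and simulates it from~$r$. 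Now $\cond{\mathit{de}}$ applies to the full infinite play, so Duplicator's response must contain an accepting state; one follows that response past the accepting state to the next segment boundary, where the invariant $p\delsimby r'$ is restored via the chain $p\delsimby q_m\delsimby r'$, and repeats with the next $L_\top$ segment. The restart therefore happens only \emph{after} the obligation has been discharged, which is exactly what your construction lacks, and no attractor reasoning is needed.
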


\begin{proof}
  First we prove the following claim:
  \begin{claim}
  For every word $\alpha = w_0w_1w_2\dots \in \Sigma^\omega$ where $w_i\in
  L_\top \cup L_\bot$, we can construct a~trace $\pi = p \transover{w_0} q_0
  \transover{w_1} q_1 \transover{w_2} \cdots$ over~$\alpha$ such that
  $p\delsimby q_0$ and $q_i \delsimby q_{i+1}$ for all $i \geq 0$.
  \end{claim}
  \begin{claimproofnoqed}
  We assign $q_0 := q$ and construct the rest of~$\pi$ by the following
    inductive construction.
  \begin{itemize}
    \item  Base case: ($i = 0$)
      From the assumption it holds that $p \transover{w_1} q_0$ and $p\delsimby
      q_0$.
      From Lemma~\ref{lem:strind} there is some $r \in Q$ s.t.~$q_0
      \transover{w_1} r$ and $q_0 \delsimby r$.
      We assign $q_1 := r$, so $q_0 \delsimby q_1$.

    \item  Inductive step:
      Let $\pi' = p \transover{w_0} q_0 \transover{w_1}
      \cdots \transover{w_i} q_i$ be a~prefix of a~trace such that $q_j
      \delsimby q_{j+1}$ for every $j < i$.
      From the transitivity of~$\delsimby$, it follows that $p \delsimby q_i$.
      From Lemma~\ref{lem:strind} there is some $r \in Q$ s.t.~$q_i
      \transover{w_i} r$ and $q \delsimby r$.
      We assign $q_{i+1} := r$, so $q_i \delsimby q_{i+1}$.
      \claimqed
  \end{itemize}
  \end{claimproofnoqed}
  %
  %
  %
  Consider a~word $\word \in (L_\bot^*L_\top)^\omega$ such that $\word =
  w_0w_1w_2\dots$ for $w_i\in L_\top\cup L_\bot$.
  We show that $\word \in \langof q$.
  According to the previous claim, we can construct a~trace $\pi = p
  \transover{w_0} q = q_0 \transover{w_1} q_1 \transover{w_2} \cdots$ over
  $\word$ s.t.~$p\delsimby q_0$ and $q_i \delsimby q_{i+1}$ for all $i \geq 0$.
  Since $p\delsimby q$, from Lemma~\ref{lem:strind} it follows that we can
  construct a~trace $\pi' = q \transover{w_0} r_0 \transover{w_1} r_1
  \transover{w_2} \cdots$ s.t.~$q_i \delsimby r_{i}$ for every $i \geq 0$.
  Because $\word$ contains infinitely often a subword from~$L_\top$, there is
  some $\ell \in \omega$ such that $q_\ell\transover{w_\ell}q_{\ell+1}$ and
  $r_\ell\transover{w_\ell}r_{\ell+1}$ for $w_\ell\in L_\top$.
  Note that it holds that $p\delsimby q_\ell \delsimby r_\ell$.
  We can again use the claim above to construct a~trace $\pi^\star = p
  \ftransover{w_\ell} q = s_0 \transover{w_{\ell+1}} s_1 \transover{w_{\ell+2}}
  \cdots$ over $\word_\ell = w_\ell w_{\ell+1} w_{\ell+2} \dots$ such that $p\delsimby s_0$
  and $s_i \delsimby s_{i+1}$ for all $i \geq 0$.
  Since $p \delsimby r_\ell$, we can simulate $\pi^\star$ from $r_\ell$ by
  a~trace $\pi^{\star\prime}$, and because $p \ftransover{w_\ell} q$, we know
  that $\pi^{\star\prime}$ will touch an accepting state in finitely many steps
  (this holds because~$w_\ell$ is from~$L_\top$, which are the words over which we can go
  from~$p$ to~$q$ and touch an accepting state).
  Consider $m \geq \ell$ such that~$s_m$ is the first state after the accepting
  state that is one of the $\{s_0, s_1, \ldots\}$ in~$\pi^{\star\prime}$.
  This reasoning could be repeated for all occurrences of a~subword
  from~$L_\top$ in~$\pi^\star$, therefore~$\word \in \langof q$.
  \qed
\end{proof}

\noindent
Next, we give a~lemma used for establishing correctness of saturating
macrostates with $\delsimby$-smaller states.

\begin{lemma}
  \label{lem:saturation_preserves_lang}
  Let $p,q,r \in Q$ such that $r \ltr a q \in \delta$ and $p\delsimby q$.
  Further, let $\A' = (Q,\delta',I,F)$ where $\delta' = \delta \cup \{ r \ltr{a} p \}$.
  Then $\langof{\A} = \langof{\A'}$.
\end{lemma}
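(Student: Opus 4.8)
The easy inclusion $\langof{\A} \subseteq \langof{\A'}$ is immediate, since $\A$ and $\A'$ share the state set, the initial states and the accepting states and $\delta \subseteq \delta'$, so every accepting run of $\A$ is an accepting run of $\A'$. For the converse I plan to take an arbitrary accepting run $\rho = \rho_0\rho_1\cdots$ of $\A'$ on a word $\word$ and convert it into an accepting run of $\A$ on $\word$; this suffices as the two automata have the same initial and accepting states. We may assume $p \notin \delta(r,a)$, since otherwise $\A = \A'$. Call a position $i$ a \emph{jump} of $\rho$ if $\rho_i = r$, $\wordof i = a$ and $\rho_{i+1} = p$: at a jump $\rho$ is forced to use the new edge $r \ltr a p$, and any infix of $\rho$ containing no jump is a run segment of $\A$. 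Let $T$ be the set of jumps of $\rho$; the proof splits according to whether $T$ is finite or infinite.

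For finite $T$ I would induct on $|T|$. If $|T|=0$ then $\rho$ is already a run of $\A$. Otherwise let $t = \max T$. The suffix $\rho_{t+1}\rho_{t+2}\cdots$ is a run segment of $\A$ from $p$, and since $\rho$ is accepting but has only finitely many accepting positions $\leq t$, this suffix visits accepting states infinitely often. Now $\rho_t = r$, $r \ltr a q \in \delta$ and $p \delsimby q$, so I feed the trace $\rho_{t+1}\rho_{t+2}\cdots$ (a legal Spoiler trace, as it uses only $\A$-transitions) to a winning delayed Duplicator strategy $\strategdel$ from $(p,q)$ and obtain a trace $\pi_q = \strategdel(q,\rho_{t+1}\rho_{t+2}\cdots)$ of $\A$ from $q$ over $\wordof{t+1:\omega}$ which, by the winning condition $\cond{\mathit{de}}$, is accepting. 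Replacing in $\rho$ everything from position $t$ on by the $\A$-transition $r \ltr a q$ followed by $\pi_q$ produces an accepting run of $\A'$ on $\word$ with exactly $|T|-1$ jumps, and the induction hypothesis finishes this case.

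For infinite $T = \{t_0 < t_1 < \cdots\}$ I would use Lemma~\ref{lem:sim-lang}. Decompose $\word = v_0\,a\,v_1\,a\,v_2\,a\cdots$ where each $v_k$ is the word read strictly between consecutive jumps, so $\rho$ has the shape $\rho_0 \transover{v_0} r \ltr a p \transover{v_1} r \ltr a p \transover{v_2} r \ltr a p \cdots$ with every $\transover{v_k}$ a run segment of $\A$ ending in $r$ and, for $k\geq 1$, starting in $p$. Extending the $k$-th segment by the transition $r \ltr a q \in \delta$ yields $p \transover{v_k} r \ltr a q$ in $\A$, hence $v_k a \in L_\bot$ for every $k \geq 1$ (notation $L_\bot,L_\top$ as in Lemma~\ref{lem:sim-lang}); if in addition the $k$-th segment carries an accepting state, this extended path does too, so $v_k a \in L_\top$. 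Since $\rho$ is accepting and only finitely many of its accepting positions precede the first jump, infinitely many of the segments reading the $v_k$ with $k\geq 1$ contain an accepting state, i.e.\ $v_k a \in L_\top$ for infinitely many $k$. Grouping the factors $v_1 a, v_2 a, \ldots$ into consecutive blocks each terminated by the next factor lying in $L_\top$ exhibits $\wordof{t_0+1:\omega} = (v_1 a)(v_2 a)\cdots \in (L_\bot^{*}L_\top)^{\omega}$, so by Lemma~\ref{lem:sim-lang} the word $\wordof{t_0+1:\omega}$ is accepted by $\A$ from $q$; prefixing such an accepting run with the $\A$-run segment $\rho_0 \transover{v_0} r \ltr a q$ gives an accepting run of $\A$ on $\word$.

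I expect the infinite case to be the main obstacle: one cannot merely redirect the \emph{first} jump from $p$ to $q$ and then follow $\strategdel$, because the Spoiler trace after that jump may itself contain further jumps, on which a Duplicator strategy of $\A$ is undefined. Redirecting the \emph{last} jump circumvents this when jumps are finitely many, and Lemma~\ref{lem:sim-lang}---essentially the statement that $q$ accepts every $\omega$-word built from finite segments over which one can get from $p$ to $q$ ($L_\bot$), infinitely many of them passing through an accepting state ($L_\top$)---resolves the infinite case in one step.
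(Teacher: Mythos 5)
Your proof is correct and follows essentially the same route as the paper's: the same case split on finitely versus infinitely many uses of the new transition $r \ltr a p$, with the infinite case discharged by Lemma~\ref{lem:sim-lang} exactly as in the paper. The only (cosmetic) difference is in the finite case, where you redirect the \emph{last} jump and induct on the number of jumps, while the paper redirects the \emph{first} jump and pushes a winning delayed strategy forward through the remaining segments; both are sound.
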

\begin{proof}
\begin{itemize}
  \item[$(\subseteq)$] Clear.

  \item[$(\supseteq)$] Consider some $\word \in \langof{\A'}$ and an accepting
    trace~$\pi$ in~$\A'$ over~$\word$.
    There are two cases:
  \begin{enumerate}
    \item ($\pi$ contains only finitely many transitions $r \ltr{a} p$)\\
      In this case, $\pi$ is of the form $\pi = \pi_i \pi_\omega$ where
      $\pi_i$~is a~finite prefix
      $\pi_i = q_0 \transover{w_0} r \ltr{a} p \transover{w_1} r \ltr a p
      \transover{w_2} \cdots \transover{w_n} r \ltr{a} p$, for $q_0 \in I$,
      and $\pi_\omega$ is an infinite trace from~$p$ that does not contain any
      occurrence of the transition $r \ltr a p$.
      We construct in~$\A$ a~trace $\pi' =
      q_0 \transover{w_0} r \ltr{a} q \transover{w_1} r_1 \ltr a q_1
      \transover{w_2}\cdots \transover{w_n} r_n \ltr{a} q_n . \pi'_\omega$
      as follows.
      Let $\strategdel$ be a~strategy for~$\delsimby$.
      We set $r_1 := \strategdel(q, p \transover{w_1} r)$, so $r \delsimby r_1$.
      Since $r \ltr a q \in \delta$, it follows that there is $r_1 \ltr a q_1
      \in \delta$ such that $p \delsimby q_1$.
      For $i > 1$, we set $r_i := \strategdel(q_{i-1}, p \transover{w_i} r)$.
      By induction, it follows that $\forall 1 \leq i \leq n: p \delsimby q_i$,
      in particular $p \delsimby q_n$.
      We set $\pi'_\omega := \strategdel(q_n, \pi_\omega)$.
      Since $\pi_\omega$ starts in~$p$ and contains infinitely many accepting states
      and $\pi'_\omega$ starts in $q_n$ and $p \delsimby q_n$, then
      $\pi'_\omega$ also contains infinitely many accepting states.
      It follows that $\pi'$ is accepting, so~$\word \in \langof \A$.

    \item ($\pi$ contains infinitely many transitions $r \ltr{a} p$)\\
      In this case, $\pi$ is of the form $\pi = q_0 \transover{w_0} r \ltr{a} p
      \transover{w_1} r \ltr a p \transover{w_2} \cdots \transover{w_n} r
      \ltr{a} p \transover{w_\omega} \cdots$, for $q_0 \in I$ and
      $\word = w_0 a w_1 a w_2 \dots$
      Since $\pi$ is accepting, for infinitely many $i\in\omega$, we have
      $p\ftransover{w_i a} p$ in $\A'$ and hence also $p\ftransover{w_ia} q$ in
      the original BA~$\A$.
      Using Lemma~\ref{lem:sim-lang} and the fact that $p \delsimby q$, we have
      $w_1aw_2a\dots \in L(q)$ and hence $\word = w_0 a w_1 a w_2 a
      \dots\in\langof{\A}$.
      \qed
  \end{enumerate}
\end{itemize}

\end{proof}

\vspace{-1mm}
\noindent
The following lemma guarantees that adding transitions in the way of
Lemma~\ref{lem:saturation_preserves_lang} does not break the computed delayed
simulation and can, therefore, be performed repeatedly, without the need to
recompute the simulation.

\enlargethispage{3mm}

\vspace{-1mm}
\begin{lemma}
  \label{lem:saturation_preserves_sim}
  Let $\delsimby$ be the delayed simulation on~$\A$.
  Further, let $p,q,r \in Q$ be such that $r \ltr{a} q \in \delta$ and
  $p\delsimby q$, and let $\A' = (Q,\delta',I,F)$ where $\delta' = \delta \cup
  \{ r \ltr{a} p \}$.
  Then $\delsimby$ is included in the delayed simulation on~$\A'$.
\end{lemma}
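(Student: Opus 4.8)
The plan is to show directly that the delayed simulation relation $\delsimby$ computed on~$\A$ remains a valid delayed simulation on~$\A'$; since the maximal delayed simulation on~$\A'$ contains every delayed simulation on~$\A'$, this yields the claim. So I would fix an arbitrary pair $s_1 \delsimby s_2$ (where $\delsimby$ is the relation on~$\A$) and describe a winning Duplicator strategy in the delayed simulation game on~$\A'$ starting from $(s_1, s_2)$. Let $\strategdel$ denote a winning Duplicator strategy for~$\delsimby$ in the game on~$\A$. The only new behaviour Spoiler can exhibit in~$\A'$ is taking the added transition $r \ltr{a} p$; on all old transitions, Duplicator simply replays $\strategdel$.

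The key step is to handle the case when Spoiler, currently in state~$r$ (so the configuration is $(r, t)$ for some~$t$ with $r \delsimby t$), plays the new transition $r \ltr{a} p$. I would have Duplicator pretend that Spoiler instead played the \emph{old} transition $r \ltr{a} q \in \delta$ (which exists by hypothesis), let $\strategdel$ produce a response $t \ltr{a} t'$ with $q \delsimby t'$, and move Duplicator to~$t'$. Now the real configuration is $(p, t')$, and we have $p \delsimby q \delsimby t'$, hence $p \delsimby t'$ by transitivity of~$\delsimby$. So after this single ``cheating'' round, Duplicator is back in a configuration $(p, t')$ where the $\A$-relation holds, and can resume playing $\strategdel$. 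Iterating, Duplicator maintains the invariant that after every round the current configuration $(s_1^{(i)}, s_2^{(i)})$ satisfies $s_1^{(i)} \delsimby s_2^{(i)}$ on~$\A$, and Duplicator's moves are always legal transitions of~$\delta \subseteq \delta'$.

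It remains to check the delayed winning condition $\condof{\mathit{de}}{\pi_p, \pi_r}$ for the two traces produced: whenever Spoiler's trace hits an accepting state at position~$i$, Duplicator's trace must hit an accepting state at some position $k \geq i$. This follows because between any two consecutive ``cheating'' rounds Duplicator is literally following $\strategdel$ on a genuine $\A$-subgame started from a $\delsimby$-related pair, and the delayed condition for that subgame guarantees the required accepting state; moreover each cheating round only \emph{restarts} a fresh $\delsimby$-related subgame, so any accepting obligation incurred after position~$i$ is discharged by the (sub)strategy currently in force, and an obligation incurred at a position before a cheating round is discharged either before that round or carried over and discharged by the new subgame using $p \delsimby t'$. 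I expect the main obstacle to be exactly this bookkeeping: making precise that a pending accepting obligation of Spoiler is never ``lost'' across a cheating round. The clean way to do it is to observe that each cheating round transitions from a configuration on the ``$r$'' side to one on the ``$p$'' side with the $\A$-simulation still holding, and the delayed condition is prefix-independent in the sense used in the proof of Lemma~\ref{lem:strind} (dropping a finite prefix of both traces preserves $\condof{\mathit{de}}{\cdot,\cdot}$, and conversely any accepting state of Spoiler lying in a dropped prefix either was already matched or forces, via $p \delsimby q$, an accepting state of Duplicator in the continuation). Hence the global condition holds, $\strategdel$ lifted as above is winning on~$\A'$, and $\delsimby$ is included in the delayed simulation on~$\A'$.
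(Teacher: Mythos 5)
Your overall plan (respond to the new transition $r \ltr{a} p$ as if Spoiler had played $r \ltr{a} q$, land in a configuration $(p,t')$ with $p \delsimby q \delsimby t'$, and resume $\strategdel$) is the same starting point as the paper's proof, and the invariant you maintain---that every configuration reached is $\delsimby$-related in~$\A$---is correct. The gap is exactly at the point you flag as ``the main obstacle,'' and your proposed resolution does not close it. The claim that an obligation incurred \emph{before} a cheating round is ``carried over and discharged by the new subgame using $p \delsimby t'$'' is false: the winning condition of $\strategdel$ from $(p,t')$ only concerns obligations incurred at positions of that subgame, so if Spoiler never visits~$F$ again after taking $r \ltr{a} p$, the condition for the subgame is vacuous and $\strategdel$ may legitimately steer Duplicator along an $F$-free path forever, leaving the old obligation permanently undischarged. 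This is not merely a bookkeeping inconvenience---the lifted strategy can actually lose. Concretely, take Spoiler's trace $t \ltr{a} f \ltr{a} r \ltr{a} p \ltr{a} z \ltr{a} z \cdots$ in $\A'$ with $f \in F$ and $p,z \notin F$, where in~$\A$ the state $q$ leads to an accepting state but $p$ leads only to the rejecting sink~$z$ (so $p \delsimby q$ holds vacuously). A memoryless $\strategdel$ of the paper's type, sitting at the Duplicator state $t'$ reached after the cheat, sees Spoiler's transition $p \ltr{a} z$ rather than $q \ltr{a} w$ and is entitled to answer with a move into an $F$-free branch, since that answer is winning for the pair $(p,t')$; the obligation from~$f$ is then never met, even though $t$ is still delayed-simulated by Duplicator's start state in~$\A'$ via a smarter (obligation-aware) strategy. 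So ``prefix-independence as in Lemma~\ref{lem:strind}'' does not apply here: dropping a prefix of both traces is sound only when no obligation from the dropped prefix is pending.

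What is missing is an argument that the accepting state visited by Spoiler \emph{before} the cheat still forces an accepting state on Duplicator's side \emph{after} it. The paper does this by comparing Duplicator's actual responses with the responses it would give along the auxiliary $\A$-trace that takes $r \ltr{a} q$ and then mimics Spoiler's real continuation from~$p$ via a $p \delsimby q$ strategy; since that auxiliary trace lives entirely in~$\A$ and carries the old accepting state, the winning condition in~$\A$ applies to it, and a \emph{dominating} strategy is used to transfer the resulting accepting visit back to the actual trace. An alternative repair in the spirit of your write-up is to let Duplicator maintain a virtual pebble on a $q$-successor trace (translate each Spoiler move from~$p$ into a move from~$q$ using a strategy for $p \delsimby q$, and answer the translated move), which makes the whole play an $\A$-play to which $\strategdel$'s winning condition genuinely applies; but this composed strategy is no longer the plain restriction of $\strategdel$, and some such extra mechanism is unavoidable.
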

\vspace{-2mm}
\begin{proof}
  Let~$\strategdel$ be a~dominating strategy compatible with~$\delsimby$ and $\strategdel'$
  be a~strategy defined for all~$s \in Q$ such that $r \delsimby s$
  as~$\strategdel'(s, x) = \strategdel(s, x)$ when $x \neq (r \ltr a p)$ and
  $\strategdel'(s, r \ltr a p) = \strategdel(s, r \ltr a q)$.
  Note that $\strategdel'$ is also dominating wrt~$\delsimby$.
  This can be shown by the following proof by contradiction:
  Suppose~$\strategdel'$ is not dominating; then there is a~strategy~$\rho$
  such that~$\strategdel'(s, r \ltr{a} p)$ must be simulated by~$\rho(s, r
  \ltr{a} p) = t$.
  But then $\strategdel(s, r \ltr{a} q)$ must also (transitivity of simulation)
  be simulated by $t$, so $\strategdel$ is not dominating.
  Contradiction.

  Further, let $t,u \in Q$ be such that~$t \delsimby u$.
  %
  Let $\pi_t = t \transover{w_1} t_f \transover{w_2} r \ltr {a} p.
  \pi'_t$ be a~trace over $\word = w_1 w_2 a w_\omega \in \Sigma^\omega$
  in~$\A'$ such that~$t_f$ is an accepting state and $t_f \transover{w_2} r$
  does not contain any occurrence of~$r \ltr a p$.
  Further, let $\pi_u = u_0 \transover{w_1} u_f \transover{w_2} u_i \ltr {a}
  u_{i+1}.\pi'_u$ be a~trace corresponding to a~run~$u_0 u_1 u_2 \dots$ over
  $\word$ in~$\A$, where~$u_0 = u$, constructed as $\pi_u = \strategdel'(u,
  \pi_t)$.

\vspace{-1mm}
  \begin{claim}
    There is a~trace $\pi_v = t \transover{w_1} v_f . \pi_v'$
    over~$\word$ such that $\pi'_v$ contains an accepting state and $\pi_v$ is
    $\delsimby$-simulated by~$\pi_u$ at every position.
  \end{claim}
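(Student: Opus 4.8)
The plan is to produce $\pi_v$ by \emph{replaying $\pi_t$ inside $\A$}. Let $\pi_v$ coincide with $\pi_t$ on the prefix $t \transover{w_1} t_f \transover{w_2} r$; this prefix is a legal path of $\A$, since by assumption $t_f \transover{w_2} r$ uses no occurrence of the new edge $r \ltr a p$, and the state reached after $w_1$ is then $v_f := t_f$, which is accepting, so the ``$\pi_v'$ contains an accepting state'' clause is discharged at once (reading $\pi_v'$ as the suffix of $\pi_v$ starting in $v_f$). At the next step $\pi_v$ takes the \emph{original} transition $r \ltr a q \in \delta$ in place of $r \ltr a p$, and from $q$ it continues by a trace over $w_\omega$ that shadows $\pi_t'$ from above: since $p \delsimby q$, a repeated application of Lemma~\ref{lem:strind} together with transitivity of $\delsimby$ (and, should $\pi_t'$ itself use $r \ltr a p$ again, the very same $r \ltr a p \mapsto r \ltr a q$ redirection that defines $\strategdel'$) yields such a continuation $\pi_v'$ with $\pi_t'[j] \delsimby \pi_v'[j]$ for all $j$. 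Thus $\pi_v = t \transover{w_1} t_f \transover{w_2} r \ltr a q . \pi_v'$ is a trace over $\word$ lying entirely in $\A$.

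Next I would check the position-wise domination $\pi_v[j] \delsimby \pi_u[j]$. On the shared prefix $t \transover{w_1} t_f \transover{w_2} r$ the Spoiler traces $\pi_t$ and $\pi_v$ are identical and along it $\pi_u$ is the response of $\strategdel' = \strategdel$ (only old transitions occur); since $\strategdel$ is a winning strategy from $(t,u)$, Lemma~\ref{lem:strind} applied step by step gives $\pi_v[j] = \pi_t[j] \delsimby \pi_u[j]$ for every $j \le |w_1 w_2|$, in particular $r \delsimby u_i$. At the $a$-step, $\pi_u$ takes $u_{i+1} = \strategdel'(u_i, r \ltr a p) = \strategdel(u_i, r \ltr a q)$; from $r \delsimby u_i$, $r \ltr a q \in \delta$ and Lemma~\ref{lem:strind} we obtain $q \delsimby u_{i+1}$, and by transitivity through $p \delsimby q$ also $p \delsimby u_{i+1}$, so the invariant survives this step as well. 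Beyond it, $\pi_u'$ is $\strategdel'$'s response to $\pi_t'$ (a trace from $p$) while $\pi_v'$ shadows $\pi_t'$ from $q$; the invariant is then carried along by a single interleaved induction over positions using exactly the same three ingredients (Lemma~\ref{lem:strind} on old transitions, the redirection on the new one, transitivity of $\delsimby$), and where a concrete path between $p$ and $q$ is involved one may additionally appeal to Lemma~\ref{lem:sim-lang}. This establishes that $\pi_v$ is $\delsimby$-simulated by $\pi_u$ at every position.

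The main obstacle I anticipate is keeping the argument non-circular: it must never secretly invoke the statement of Lemma~\ref{lem:saturation_preserves_sim} itself (that $\delsimby$ is a delayed simulation on $\A'$). The way out is that every use of a \emph{winning} strategy refers to the original $\strategdel$ on $\A$; the modified $\strategdel'$ enters only through its defining equation and through its already-established property of being dominating, and the single place where $\strategdel'$ ``lies'' about Spoiler's move --- replaying $r \ltr a p$ as $r \ltr a q$ --- is repaired via $p \delsimby q$ and transitivity, never by assuming $\strategdel'$ wins on $\A'$. The second delicate point is that $\pi_t'$ may use $r \ltr a p$ again, possibly infinitely often, which forces the clean ``prefix $+$ tail'' splicing above to be recast as one simultaneous induction over the positions of $\pi_t$; once set up in that form, each individual step is routine and needs nothing beyond Lemmas~\ref{lem:strind} and~\ref{lem:sim-lang}.
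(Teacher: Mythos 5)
Your construction covers only one of the two cases the proof needs. The hypothesis of the surrounding lemma guarantees only that the segment $t_f \transover{w_2} r$ of $\pi_t$ avoids the new transition $r \ltr a p$; it says nothing about the segment $t \transover{w_1} t_f$. Your opening move --- ``let $\pi_v$ coincide with $\pi_t$ on the prefix $t \transover{w_1} t_f \transover{w_2} r$; this prefix is a legal path of $\A$'' --- is justified only by the $w_2$ part of that prefix, so it fails whenever $r \ltr a p$ already occurs inside $t \transover{w_1} t_f$. In that situation you cannot set $v_f := t_f$, and the ``discharged at once'' argument for the accepting state collapses. The paper splits the proof accordingly: when $t \transover{w_1} t_f$ is free of the new edge, it argues exactly as you do; when it is not, it redirects each occurrence of $r \ltr a p$ to $r \ltr a q$ already inside the $w_1$ segment, which only yields a state $v_f$ with $t_f \delsimby v_f$ (not necessarily accepting), and the accepting state in $\pi'_v$ must then be recovered from the \emph{winning condition} of the delayed-simulation game (Spoiler's trace visits the accepting state $t_f$, so Duplicator's response must eventually visit an accepting state), rather than being read off at position $|w_1|$. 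This second case is a genuinely different argument, and your proposal neither performs it nor reduces to it: your closing remark about $\pi'_t$ possibly reusing the new edge addresses only the tail after the designated occurrence of $r \ltr a p$, not the prefix before $t_f$.

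Aside from that omission, your reasoning for the clean-prefix case (replaying the prefix in $\A$, substituting $r \ltr a q$ for $r \ltr a p$, propagating $\delsimby$ position-wise via Lemma~\ref{lem:strind}, transitivity through $p \delsimby q$, and the dominating property of $\strategdel'$) is the same as the paper's, at essentially the same level of rigor.
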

\vspace{-1mm}
  \begin{claimproofnoqed}
    %
    %
    We have the following two cases:
    \begin{itemize}
      \item  ($t \transover{w_1} t_f$ does not contain any occurrence of $r
        \ltr a p$)\\
    Let~$\pi_v = t \transover{w_1} t_f \transover{w_2} r \ltr a q . \pi'_v$
    be a~trace in~$\A$ over~$\word$ obtained from~$\pi_t$ by starting with its
    prefix up to~$r$, taking~$r \ltr a q$, and continuing with~$\pi'_v =
    \strategdel'(q, \pi'_t)$.
    Since in~$\pi_v$, it holds that $t_f$ is at the same position as~$t_f$
    in~$\pi_t$, the first part of the claim holds.
    Further, $\pi_u$~clearly $\delsimby$-simulates $\pi_v$ on $t \transover{w_1}
    t_f \transover{w_2} r$, and because~$\strategdel'$ simulates $r \ltr a p$ by
    a~transition to a~state~$u_{i+1}$ such that $q \delsimby u_{i+1}$
    and~$\pi'_v$ is constructed using~$\strategdel'$, then also the second part
    of the claim holds.


    %
    \item  ($t \transover{w_1} t_f$ contains at least one occurrence of $r
      \ltr a p$)\\
      Suppose that $\pi_t$ starts with $t \transover{w_{11}} r
      \ltr a p \transover{w_{12}} t_f$ such that $t \transover{w_{11}} r$ does
      not contain any $r \ltr a p$.
      Then let us start building~$\pi_v$ such that it starts with $t
      \transover{w_{11}} r \ltr a q$.
      On this prefix, $\pi_v$ is clearly $\delsimby$-simulated by the
      corresponding prefix of~$\pi_u$.
      We continue from~$q$ using the strategy~$\strategdel'$.
      In particular, the next time we reach $r \ltr a p$ in~$\pi_t$ while we are
      at some state~$v_1$ such that $r \delsimby v_1$, we simulate the
      transition by $\strategdel'(v_1, r \ltr a p)$ and so on.
      We can observe that when we arrive to~$t_f$ in~$\pi_t$, we also arrive
      to~$v_f$ in $\pi_v$ such that $t_f \delsimby v_f$.
      Therefore, $\pi'_v$ contains an accepting state.
      Moreover, since $\strategdel'$ is dominating, the second part of the claim
      also holds.
    \claimqed
    \end{itemize}
  \end{claimproofnoqed}
  \vspace{-2mm}
  From the claim above, it follows that
  the trace $u_f \transover{w_2} u_i \ltr a u_{i+1}.\pi'_u$ contains an
  accepting state, so~$\condof{\mathit{de}}{\pi_t, \pi_u}$.
  \qed
  %
\end{proof}

  \vspace{-1mm}
  \enlargethispage{3mm}
  
%
%
%

\noindent
Finally, we are ready to prove Lemma~\ref{lem:saturation}.

\vspace{-1mm}
\lemmaSaturation*

\vspace{-2mm}
\begin{proof}
  \begin{itemize}
    \item[$(\subseteq)$]
      Let $\word\in \langof{\BSvenSat}$ and~$\rho$ be an arbitrary accepting run
      over~$\word$ in~$\BSvenSat$ such that $\rho = S_0S_1 \dots S_{n-1}(S_{n},
      O_{n}, f_{n}, i_{n})(S_{n+1}, O_{n+1}, f_{n+1}, i_{n+1}) \dots$.
      For the sake of contradiction, assume that $\word\in \langof{\A}$,
      therefore, there is a~run~$\rho'$ on~$\word$ in~$\A$ having infinitely
      many accepting states.
      From the fact that tight level rankings form a non-increasing sequence, we have that
      $f_n(\rho'(n)) \geq f_{n+1}(\rho'(n+1)) \geq \cdots$.
      This sequence eventually stabilizes and from the property of level
      rankings and the fact that~$\rho'$ is accepting, it stabilizes in
      some~$\ell$ such that $f_\ell(\rho'(\ell))$ is even.
      This, however, means that the $O$~component of macrostates in~$\rho$
      cannot be emptied infinitely often, and, therefore, $\rho$~is not
      accepting, which is a~contradiction.
      Hence $\word\notin \langof{\A}$, so (from
      Proposition~\ref{prop:sven_correct}) $\word\in \langof{\BSven}$.

      %
      %

    \item[$(\supseteq)$]
      Consider some $\word\in \langof{\BSven}$.
      Let~$\A'$ be a~BA obtained from~$\A$ by adding transitions according
      to~Lemma~\ref{lem:saturation_preserves_sim}.
      Then from Lemma~\ref{lem:saturation_preserves_lang}, we have that
      $\langof{\A} = \langof{\A'}$.
      Therefore, $\word\in\langof{\BSven'}$ where $\BSven'$ is the BA obtained
      from~$\A'$ using \CompSchewe.
      It is easy to see that we can construct a~run in~$\BSvenSat$ that mimics
      the levels of run DAG of~$\word$ in~$\A'$ (i.e., we are able to empty the
      $O$~component infinitely often).
      Hence $\word\in\langof{\BSvenSat}$.
      \qed
%
	\end{itemize}
\end{proof}


\vspace{-2mm}
\propositionSatPurge*
\vspace{-2mm}

\begin{proof}
  \begin{itemize}
    \item[$(\subseteq)$]
      This part is the same as in the proof of Lemma~\ref{lem:saturation}.

    \item[$(\supseteq)$]
      Consider some $\word\in \langof{\BSven}$.
      Let~$\A'$ be a~BA obtained from~$\A$ by adding transitions according
      to~Lemma~\ref{lem:saturation_preserves_sim}.
      Then from Lemma~\ref{lem:saturation_preserves_lang}, we have that
      $\langof{\A} = \langof{\A'}$.
      Therefore, $\word\in\langof{\BSven'}$ where $\BSven'$ is the BA obtained
      from~$\A'$ using \CompSchewe.
      It is easy to see that we can construct a~run in~$\BSvenSat$ that mimics
      the levels of run DAG of~$\word$ in~$\A'$ (i.e., we are able to empty the
      $O$~component infinitely often). Using Lemmas~\ref{lem:simrundir}
			and~\ref{lem:simrundag}, we can conclude that the run contains no macrostate
			of the form $(S,O,f,j)$, where $f(p) > f(q)$ and $p\dirsimby q$, or $f(p) >
			\rond{f(q)}$ and $p\delsimby q$ for $p, q \in S$. Therefore, $\rho$ is also
			an accepting run in~$\BSvenSatdide$.
      Hence $\word\in\langof{\BSvenSatdide}$.
      \qed
	\end{itemize}
\end{proof}

\vspace{-0.0mm}
\subsection{Remarks on Compression of Macrostates}\label{sec:compression}
\vspace{-0.0mm}

An analogy to saturation of macrostates is their
compression~\cite[Section~6]{GlabbeekP08},
based on removing simulation-smaller states from a macrostate. This is,
however, not possible even for direct simulation, as we can see in the following
example.

\begin{example}
Consider the BA over $\Sigma=\{a\}$ given below.

\begin{center}
  \begin{tikzpicture}[shorten >=1pt,node distance=2cm,on grid,auto]
     \node[state,initial below, initial text={}] (p)   {$p$};
     \node[state,accepting] (q) [left=of p] {$q$};
     \node[state,accepting] (r) [right=of p] {$r$};
      \path[->]
      (p) edge  node[above] {$a$} (q)
          edge  node[above] {$a$} (r)
          edge [loop above] node {$a$} ()
      (q) edge [loop above] node {$a$} ()
      (r) edge [loop above] node {$a$} ();
  \end{tikzpicture}
\end{center}
For this BA we have $q\dirsimby r$ and $r\dirsimby q$. If we compress the
macrostates obtained in \CompSchewe, there is the following trace in the
output automaton:
\begin{align*}
  \{p\} &\ltr{a} (\{ p,q \}, \emptyset, \{ p\mapsto 3, q\mapsto 2, r\mapsto 1 \}, 0) \ltr{a}
  (\{ p,r \}, \{ r \}, \{ p\mapsto 3, q\mapsto 1, r\mapsto 2 \}, 2)\\
  &\ltr{a} (\{ p,q \}, \emptyset, \{ p\mapsto 3, q\mapsto 2, r\mapsto 1 \}, 2)
  \ltr{a} (\{ p,r \}, \{ r \}, \{ p\mapsto 3, q\mapsto 1, r\mapsto 2 \}, 0)\\
  &\ltr{a} (\{ p,q \}, \emptyset, \{ p\mapsto 3, q\mapsto 2, r\mapsto 1 \}, 0)\ltr{a} \cdots
\end{align*}
This trace contains infinitely many final states (we flush the $O$-set
infinitely often), hence we are able to accept the word~$a^\omega$, which is,
however, in the language of the input BA.
\qed
\end{example}

\vspace{-0.0mm}
\section{Use after Simulation Quotienting}\label{sec:label}
\vspace{-0.0mm}

In this short section, we establish that our optimizations introduced in
Sections~\ref{sec:purging} and~\ref{sec:saturation} can be applied with no
additional cost in the setting when BA complementation is preceded with
simulation-based reduction of the input BA (which is usually helpful), i.e.,
when the simulation is already computed beforehand for another purpose.
In particular, we show that simulation-based reduction preserves the
simulation (when naturally extended to the quotient automaton).
First, let us formally define the operation of quotienting.

Given an $x$-simulation $\xsimby$ for $x \in \{\mathit{di}, \mathit{de}\}$, we
use $\xsimilar$ to denote the $x$-\emph{similarity} relation (i.e., the
symmetric fragment) ${\xsimilar} = {\xsimby} \cap {\simby_x^{-1}}$.
Note that since $\xsimby$ is a~preorder, it holds that~$\xsimilar$ is an
equivalence.
We use $\clxof q$ to denote the equivalence class of~$q$ wrt~$\xsimilar$.
The \emph{quotient} of a~BA~$\A = (Q, \delta, I, F)$ wrt~$\xsimilar$ is the
automaton
\begin{equation}
  \A/{\xsimilar} = (Q/{\xsimilar}, \delta_{\xsimilar}, I_{\xsimilar},
  F_{\xsimilar})
\end{equation}
with the transition function $\delta_{\xsimilar}(\clxof q, a) = \{\clxof r \mid r \in \delta(\clxof q, a) \}$ and the set of initial and accepting states
$I_{\xsimilar} = \{ \clxof{q}\in Q/{\xsimilar} \mid q \in I\}$ and
$F_{\xsimilar} = \{ \clxof{q}\in Q/{\xsimilar} \mid q \in F\}$ respectively.
\begin{proposition}[\cite{bustan2003simulation}, \cite{EtessamiWS05}]
If $x \in \{\mathit{di}, \mathit{de}\}$, then $\langof{\A/{\xsimilar}} = \langof
\A$.
\end{proposition}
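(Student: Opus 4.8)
The plan is to prove the two inclusions of $\langof{\A/{\xsimilar}} = \langof{\A}$ separately, the left-to-right one being routine and the right-to-left one resting on a lifting argument powered by Lemma~\ref{lem:strind}. For $\langof{\A} \subseteq \langof{\A/{\xsimilar}}$: given an accepting run $\rho = q_0 q_1 \cdots$ of $\A$ over a word~$\word$, the sequence $\clxof{q_0}\clxof{q_1}\cdots$ is a run of $\A/{\xsimilar}$ over~$\word$, since $q_0 \in I$ forces $\clxof{q_0} \in I_{\xsimilar}$ and $q_{i+1} \in \delta(q_i, \wordof i) \subseteq \delta(\clxof{q_i}, \wordof i)$ forces $\clxof{q_{i+1}} \in \delta_{\xsimilar}(\clxof{q_i}, \wordof i)$; it is accepting because each $q_i \in F$ gives $\clxof{q_i} \in F_{\xsimilar}$.

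For the converse $\langof{\A/{\xsimilar}} \subseteq \langof{\A}$, let $\bar\rho = C_0 C_1 \cdots$ be an accepting run of $\A/{\xsimilar}$ over~$\word$. I would build a run $\rho = q_0 q_1 \cdots$ of $\A$ over~$\word$ starting in~$I$ while maintaining the invariant that $q_i$ $x$-simulates every state of $C_i$ (equivalently, since the states of $C_i$ are mutually $\xsimilar$-equivalent, $q_i$ $x$-simulates any one of them). The base case takes $q_0 \in C_0 \cap I$, available because $C_0 \in I_{\xsimilar}$, and the invariant holds by reflexivity of~$\xsimby$. For the step, the quotient transition $C_{i+1} \in \delta_{\xsimilar}(C_i, \wordof i)$ unfolds to a witness $s_i \in C_i$ and $r_{i+1} \in \delta(s_i, \wordof i)$ with $\clxof{r_{i+1}} = C_{i+1}$; since $s_i \xsimby q_i$, Lemma~\ref{lem:strind} supplies $q_{i+1} \in \delta(q_i, \wordof i)$ with $r_{i+1} \xsimby q_{i+1}$, and transitivity of $\xsimby$ together with the $\xsimilar$-equivalence of the states of $C_{i+1}$ with $r_{i+1}$ restores the invariant.

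It remains to arrange that $\rho$ visits $F$ infinitely often, and here the two cases diverge. For $x = \mathit{di}$ nothing more is needed: whenever $C_i \in F_{\xsimilar}$ it contains some $f \in F$ with $f \dirsimby q_i$, and since the direct winning condition is already tested at position~$0$ of the game from $(f,q_i)$, we get $q_i \in F$; as $\bar\rho$ visits $F_{\xsimilar}$ infinitely often, so does~$\rho$. For $x = \mathit{de}$ an arbitrary run obeying the invariant may postpone visiting $F$ forever, so the choices of $q_{i+1}$ must be organised into an obligation-discharging scheme. At a position~$j$ with some $f \in C_j \cap F$ (hence $f \delsimby q_j$) I would first build --- by the same Lemma~\ref{lem:strind} construction, but started from~$f$ --- an infinite trace $\tau = f = g_0 \ltr{\wordof j} g_1 \ltr{\wordof{j+1}} \cdots$ that itself stays ``above'' the quotient tail, i.e. $g_t$ delay-simulates $C_{j+t}$ for all~$t$; then I would let Duplicator answer $\tau$ from~$q_j$ with a winning delayed strategy and overwrite the tail of~$\rho$ from position~$j$ by Duplicator's response $q_j q_{j+1}\cdots$. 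Since a winning delayed strategy wins every tail game, $g_t \delsimby q_{j+t}$ for all~$t$, so $q_{j+t}$ still delay-simulates $C_{j+t}$ and the invariant survives; and since $f \in F$, the delayed condition at position~$0$ forces $q_{j+k} \in F$ for some finite~$k$. Iterating this over the infinitely many accepting positions of $\bar\rho$, each time choosing the next accepting position strictly beyond the one already reached, yields infinitely many visits of~$\rho$ to~$F$, so $\rho$ is accepting and $\word \in \langof{\A}$.

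I expect the delayed-simulation acceptance argument to be the main obstacle: one must discharge the ``eventually accepting'' obligation carried by $q_j$ without losing synchronisation with the quotient run $\bar\rho$. The observation that resolves it is that the Spoiler trace used to discharge the obligation can itself be chosen to stay above $\bar\rho$, so that Duplicator's response is simultaneously forced into~$F$ and kept on states that delay-simulate the corresponding classes of~$\bar\rho$. (This essentially recovers the cited results of~\cite{bustan2003simulation} for the direct and~\cite{EtessamiWS05} for the delayed case.)
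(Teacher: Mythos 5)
The paper does not prove this proposition at all---it is imported wholesale from \cite{bustan2003simulation} (direct) and \cite{EtessamiWS05} (delayed)---so there is no in-paper argument to compare against; your reconstruction is correct and self-contained, and it is instructive to see how it reuses the paper's own machinery. The forward inclusion and the run-lifting via the invariant that $q_i$ $x$-simulates every state of $C_i$, driven by Lemma~\ref{lem:strind} at each step, are the standard argument, and you correctly isolate the one delicate point: for $x=\mathit{de}$ a run merely maintaining the invariant can starve the acceptance condition, so you repeatedly restart the tail by letting Duplicator answer, from the current $q_j$, a Spoiler trace that begins in an accepting state of $C_j$ and is itself built (again by Lemma~\ref{lem:strind}) to stay above the quotient run; this simultaneously forces a visit to $F$ and preserves the invariant. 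Two steps deserve to be made explicit in a full write-up, though neither is a gap: (i) the claim that a winning delayed strategy gives $g_t \delsimby q_{j+t}$ at \emph{every} intermediate position rests on the observation that the delayed winning condition restricted to any suffix of a winning play is a winning play of the corresponding tail game (the iterated form of Lemma~\ref{lem:strind}; note the paper's strategies are formally positional, which glosses over the one bit of memory delayed simulation games need, but the paper itself does the same); and (ii) the limit of the successive tail overwrites is a well-defined accepting run because each restart position is chosen strictly beyond the previously secured accepting position, so every finite prefix eventually stabilises. In short, your proposal actually reproves the cited results using the paper's Lemma~\ref{lem:strind}, which is more than the paper itself offers.
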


\begin{remark}[\cite{EtessamiWS05}]
$\langof{\A/{\fairsimilar}} \neq \langof \A$
\end{remark}

%
\noindent
Finally, the following lemma shows that quotienting preserves direct and delayed
simulations, therefore, when complementing~$\A$, it is possible to first
quotient~$\A$ wrt a~direct/delayed simulation and then use the same simulation
(lifted to the states of the quotient automaton) to optimize the
complementation.

\begin{restatable}{lemma}{lemmaQuotientPreservesSim}
  \label{lem:quotientPreservesSim}
Let $\xsimby$ be the $x$-simulation on~$\A$ for $x \in \{\mathit{di},
\mathit{de}\}$.
Then the relation $\quotxsimby$ defined as $\clxof q \quotxsimby \clxof r$
iff $q \xsimby r$ is the $x$-simulation on~$\A/{\xsimilar}$.
\end{restatable}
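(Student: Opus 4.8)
The plan is to show both inclusions between the relation $\quotxsimby$ and the maximal $x$-simulation on $\A/{\xsimilar}$. For the direction showing $\quotxsimby$ is contained in the $x$-simulation, suppose $\clxof q \quotxsimby \clxof r$, i.e., $q \xsimby r$ in $\A$. Given a winning Duplicator strategy $\strategx$ in the $\cond x$ game on $\A$ from $(q,r)$, I would lift it to a strategy $\quotstrategx$ on the quotient by setting $\quotstrategx(\clxof s, \clxof{p} \ltr a \clxof{p'}) = \clxof{\strategx(s, p \ltr a p')}$, after noting that every transition $\clxof p \ltr a \clxof{p'}$ in $\A/{\xsimilar}$ comes from some concrete transition $p \ltr a p'$ in $\A$ (by definition of $\delta_{\xsimilar}$). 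The key point is that if $\pi_q, \pi_r$ are the concrete traces produced by $\strategx$ in $\A$, then projecting them class-wise gives exactly the traces produced by $\quotstrategx$ in the quotient, and $\condof x {\pi_q, \pi_r}$ transfers to the projected traces because a state is accepting in $\A/{\xsimilar}$ iff it is the class of an accepting state of $\A$ (and, crucially, similar states are either all accepting or all rejecting for $x \in \{\mathit{di}, \mathit{de}\}$ — for direct simulation this is immediate from the definition of $\dirsimby$; for delayed it needs a small argument, but it is standard that $\delsimilar$-equivalent states agree on acceptance once one also recalls that direct-simulation-equivalence is implied here only in the relevant "accepting $\Rightarrow$ eventually accepting" sense, so I would phrase the acceptance-transfer purely in terms of the $\cond{x}$ conditions applied to the class-projected traces rather than claiming the classes themselves are "accepting states"). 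This gives $\clxof q \xsimby_x^{\A/{\xsimilar}} \clxof r$.

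For the converse direction, suppose $\clxof q$ is $x$-simulated by $\clxof r$ in $\A/{\xsimilar}$ via a winning Duplicator strategy $\quotstrategx$; I want to conclude $q \xsimby r$ in $\A$. Given a Spoiler trace $\pi_q = q_0 \ltr{\wordof 0} q_1 \cdots$ in $\A$ from $q_0 = q$, I project it to the quotient trace $\clxof{q_0} \ltr{\wordof 0} \clxof{q_1} \cdots$, let Duplicator answer with $\quotstrategx$ producing classes $C_0 = \clxof r, C_1, C_2, \ldots$, and then I must lift this back to a concrete trace in $\A$ from $r$. This lifting is the main obstacle: from $r \in C_0$ and a quotient transition $C_0 \ltr{\wordof 0} C_1$ I know there is \emph{some} concrete $r' \ltr{\wordof 0} r''$ with $\clxof{r'} = C_0$, $\clxof{r''} = C_1$, but I need the source to be exactly $r$, not merely some class-equivalent state. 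This is where I would invoke Lemma~\ref{lem:strind}: since $r \xsimby r'$ (they are in the same class, and $\xsimby$ is symmetric on classes via $\xsimilar$ — wait, more precisely $r' \xsimby r$ since $r \xsimilar r'$), and $r' \ltr{\wordof 0} r''$, there is $\bar r \in Q$ with $r \ltr{\wordof 0} \bar r$ and $r'' \xsimby \bar r$, hence $\clxof{\bar r}$ with $C_1 \quotxsimby \clxof{\bar r}$; combined with the already-established first inclusion ($\quotxsimby$ refines the quotient $x$-simulation) and transitivity, $\clxof{\bar r}$ is still $x$-simulation-above $\clxof{q_1}$. Iterating, I build a concrete trace $\pi_r = r = \bar r_0 \ltr{\wordof 0} \bar r_1 \ltr{\wordof 1} \cdots$ in $\A$ that class-projects to a trace pointwise $x$-simulation-above the projection of $\pi_q$; then the $\cond x$ condition on the quotient traces, pulled back through the fact that a concrete state is accepting iff its class "is accepting" in the above sense, yields $\condof x {\pi_q, \pi_r}$, so $r$ $x$-simulates $q$ in $\A$.

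I expect the delayed case to be the genuinely delicate part, for two reasons. First, the acceptance-transfer step must be handled carefully because $\condtrace{\mathit{de}}$ only requires that after each accepting $p_i$ there is a \emph{later} accepting $r_k$; when passing to the class-projected trace $\pi_r$ built by the iterative lifting, some of the $\bar r_i$ may differ from the concrete states Duplicator "intended," so I must check that the accepting positions are preserved, which again reduces to the observation that $\delsimilar$-equivalent states have the same acceptance status together with the fact that the lifted trace stays pointwise $\delsimby$-above the quotient trace (so an accepting class in the quotient Duplicator-trace forces, via $\delsimby$ and a final appeal to Lemma~\ref{lem:strind}-style reasoning, an eventually-accepting continuation). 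Second, one must make sure the iterative lifting is \emph{fair} in the sense that it does not starve: because at each step I only use Lemma~\ref{lem:strind} to match a single transition, the construction is a straightforward induction with the invariant ``$\clxof{q_i}$ is $x$-simulated by $\clxof{\bar r_i}$ in $\A/{\xsimilar}$'' (using transitivity of the quotient $x$-simulation), so no real obstacle arises there. The direct case drops the ``eventually'' subtlety entirely and is a clean specialization. I would present the argument once for general $x \in \{\mathit{di}, \mathit{de}\}$, isolating the acceptance-status lemma (similar states agree on membership in $F$, which for these $x$ is immediate) as a preliminary remark.
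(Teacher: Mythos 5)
Your proposal follows essentially the same route as the paper's proof: both directions are handled by lifting a quotient trace to a concrete trace that is pointwise $\xsimby$-above its classes (via iterated use of Lemma~\ref{lem:strind}) and by projecting concrete traces class-wise, with the winning condition $\cond{x}$ transferred through these correspondences and maximality of $\quotxsimby$ obtained exactly as in the paper. The one point you initially overstate---$\delsimilar$-equivalent states agreeing on membership in $F$ is false in general---you correctly retract, arriving at the same treatment as the paper, which phrases the delayed acceptance transfer as an ``eventually accepting'' property of the lifted trace rather than as state-level agreement.
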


\begin{proof}
First, we show that $\quotxsimby$ is well defined, i.e., if $q \xsimby r$, then
for all~$q' \in \clxof q$ and $r' \in \clxof r$, it holds that $q' \xsimby r'$.
Indeed, this holds because $q' \xsimilar q$ and $r \xsimilar r$, and therefore
$q' \xsimby q \xsimby r \xsimby r'$; the transitivity of simulation yields $q'
\xsimby r'$.

Next, let~$\strategx$ be a~strategy that gives $\xsimby$.
Consider a trace defined as $\clxof{\pi_q} = \clxof{q_0} \ltr{\wordof 0} \clxof{q_1}
\ltr{\wordof  1} \cdots$ over a~word~$\word \in \Sigma^\omega$ in
$\A/{\xsimilar}$. Then,
\begin{enumerate}
  \item for $x = \mathit{di}$ there is a~trace $\pi_q = q_0' \ltr{\wordof 0}
  q_1' \ltr{\wordof 1} \cdots$ in~$\A$ s.t. $q_0'\in\clxof{q_0}$ and $q_i
  \xsimby q_i'$ for $i \geq 0$. Therefore, if $\clxof{q_i}$ is accepting then so is
  $q_i'$;
  \item for $x = \mathit{de}$ there is a~trace $\pi_q = q_0'
  \ltr{\wordof 0} q_1' \ltr{\wordof 1} \cdots$ in~$\A$ s.t.
  $q_0'\in\clxof{q_0}$, $q_i \xsimby q_i'$ for $i \geq 0$ and, moreover, if
  $\clxof{q_i}$ is accepting then there is $q_k'$ for $k \geq i$ s.t. $q_k'\in F$.
\end{enumerate}

\noindent
Further, let $\clxof{q_0} \quotxsimby \clxof{r_0}$. Then there is a trace $\pi_r
= \strategx(r, \pi_q) = (r = r_0) \ltr{\wordof 0} r_1 \ltr{\wordof 1} \cdots$
simulating~$\pi_q$ in~$\A$ from~$r$.
Further, consider its projection $\clxof{\pi_r} = \clxof{r_0}
\ltr{\wordof 0} \clxof{r_1} \ltr{\wordof  1} \cdots$
into~$\A/{\xsimilar}$. For all $i \geq 0$, we have that $q_i \xsimby r_i$, and
therefore also $\clxof{q_i} \quotxsimby \clxof{r_i}$. Since~$\condof x {\pi_q,
\pi_r}$, then also $\condof x {\clxof{\pi_q}, \clxof{\pi_r}}$.

Finally, we show that $\quotxsimby$ is maximal. For the sake of contradiction,
suppose that $\clxof r$ is $x$-simulating $\clxof q$ for some $q, r \in Q$
s.t.~$q \not\xsimby r$. Consider a word $\word \in \Sigma^\omega$
and a trace $\pi_q = (q = q_0) \ltr{\wordof 0} q_1 \ltr{\wordof 1} \cdots$ over
$\word$ in $\A$. Then there is a trace $\clxof{\pi_q} = \clxof{q = q_0}
\ltr{\wordof 0} \clxof{q_1} \ltr{\wordof  1} \cdots$ over~$\word$ in
$\A/{\xsimilar}$. According to the assumption, there is also a~trace
$\clxof{\pi_r} = \clxof{r = r_0} \ltr{\wordof 0} \clxof{r_1} \ltr{\wordof  1}
\cdots$ such that $\clxof{\pi_r}$ is $x$-simulating $\clxof{\pi_q}$. But then
there will also exist a trace $\pi_r = (r = r_0) \ltr{\wordof 0} r_1'
\ltr{\wordof 1} r_1' \ltr{\wordof 2} \cdots$ such that $r_i \xsimby r_i'$ for
all $i \in \omega$ and
$\condof x {\pi_q, \pi_r}$ (see the previous part of the proof).
Therefore, since $\xsimby$ is maximal, we have that $q \xsimby r$, which is in
contradiction with the assumption.
%
%
%
\qed
\end{proof}

\vspace{-0.0mm}
\section{Experimental Evaluation}\label{sec:label}
\vspace{-0.0mm}

We implemented our optimisations in a~prototype tool%
\footnote{%
\blinded{\url{https://github.com/vhavlena/ba-complement}}%
}
written in Haskell and performed preliminary experimental evaluation on a~set
of~124 random BAs with a non-trivial language over a two-symbol alphabet
generated using Tabakov and Vardi's model~\cite{TabakovV05}.
The parameters of input automata were set to the following bounds:
number of states: 6--7,
transition density: 1.2--1.3, and
acceptance density: 0.35--0.5.
Before complementing, the BAs were quotiented wrt the direct simulation for
experiments with \PurgeDi and the delayed simulation for experiments with \PurgeDe
and \PurgeDiDe.
The timeout was set to 300\,s.

We present the results for our strongest optimizations for \emph{outputs} of the size
up to 500 states in Fig.~\ref{fig:res-del-sim}.
As can be seen in Fig.~\ref{fig:purge}, purging alone often significantly
reduces the size of the output.
The situation with saturation is, on the other hand, more complicated.
In Fig.~\ref{fig:sat-del}, we can see that in some cases, the saturation
produces even smaller BAs than only purging, on the other hand, in some cases,
larger BAs are produced.
This is expected, because saturating the $S$~component of macrostates also means
that more level rankings (the $f$~component) need to be considered.

For outputs of a~larger size (we had 11 of them), the results follow a~similar
trend, but the probability that saturation will increase the size of the result
decreases.
For some concrete results, for one BA, the size of the output BA decreased from
4065 (\CompSchewe) to 985 (\PurgeDiDe) to 929 (\PurgeDiDe\!\!\!+\Saturate),
which yields a reduction to 24\,\%, resp. 22\,\%!
Further, we observed that all \PurgeX methods usually give similar results,
with the difference of only a~few states (when \PurgeDi and \PurgeDe differ,
\PurgeDi usually wins over \PurgeDe).


\begin{figure}[t]
  \centering
  \begin{subfigure}[b]{0.48\textwidth}
      \hspace*{-8mm}
      \includegraphics[width=70mm]{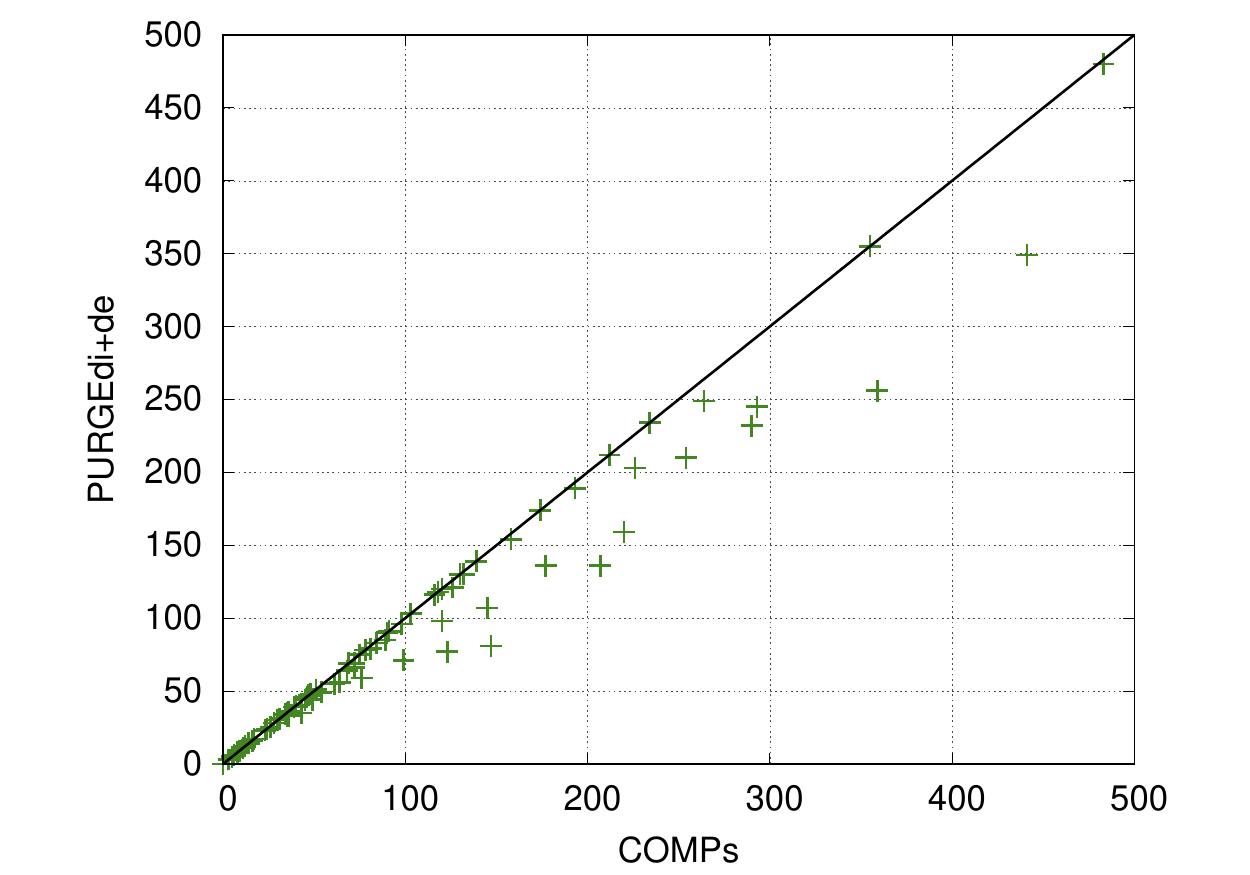}
      \caption{\PurgeDiDe vs.\ \CompSchewe}
      \label{fig:purge}
  \end{subfigure}
~
  \begin{subfigure}[b]{0.48\textwidth}
      \hspace*{-8mm}
      \includegraphics[width=70mm]{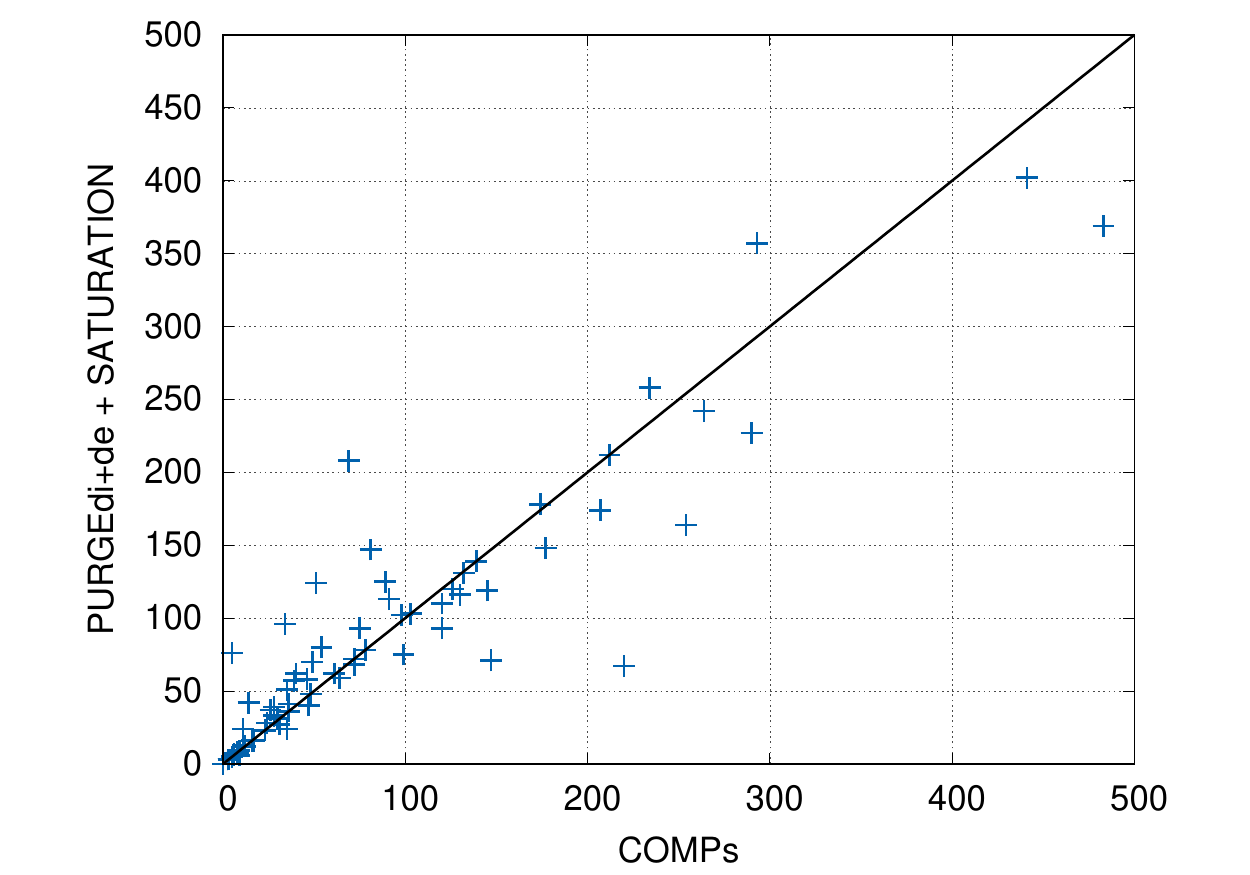}
      \caption{\PurgeDiDe\!\!\!+\Saturate vs.\ \CompSchewe}
      \label{fig:sat-del}
  \end{subfigure}
  \caption{Comparison of the number of states of complement BAs generated by
    \CompSchewe and our optimizations (lower is better)}
  \label{fig:res-del-sim}
\end{figure}


\vspace{-0.0mm}
\section{Related Work}\label{sec:related}
\vspace{-0.0mm}
BA complementation has a~long research track.
Known approaches can be roughly classified into
Ramsey-based~\cite{sistla1987complementation},
determinization-based~\cite{safra1988complexity,piterman2006nondeterministic},
rank-based~\cite{Schewe09},
slice-based~\cite{kahler2008complementation,vardi2008automata},
learning-based~\cite{li2018learning}, and the recently proposed subset-tuple
construction~\cite{allred2018simple}.
Those approaches build on top of different concepts of capturing words
accepted by a~complement automaton.
Some concepts can be translated into others, such as the slice-based
approach, which can be translated to the rank-based
approach~\cite{vardi2013unifying}.
Such a~translation can help us get a deeper
understanding of the BA complementation problem and the relationship between
optimization techniques for different complementation algorithms.

Because of the high computational complexity of complementing a~BA, and,
consequently, also checking BA inclusion and universality (which use
complementation as their component), there has been some effort to develop
heuristics that help to reduce the number of explored states in practical cases.
The most prominent ones are heuristics that leverage various notions of
simulation relations, which often provide a good compromise between the overhead
they impose and the achieved state space reduction.
Direct~\cite{bustan2003simulation,somenzi2000efficient},
delayed~\cite{EtessamiWS05},
fair~\cite{EtessamiWS05},
their variants for alternating B\"{u}chi automata~\cite{fritz-alternating-sim}, and
multi-pebble simulations~\cite{etessami2002hierarchy} are the best-studied
relations of this kind.
Some of the relations can be used quotienting, but also for pruning transitions
entering simulation-smaller states (which may cause some parts of the BA to
become inaccessible).
A~series of results in this direction was recently developed by Clemente and
Mayr~\cite{clemente2011buchi,MayrC13,mayr2019efficient}.

Not only can the relations be used for reducing the size of the input BA, they
can also be used for under-approximating inclusion of languages of states. For
instance, during a~BA inclusion test $\langof{\A_S} \stackrel{?}{\subseteq}
\langof{\A_B}$, if every initial state of~$\A_S$ is simulated by an initial
state of~$\A_B$, the inclusion holds and no complementation needs to be
performed. But simulations can also be used to reduce the explored state space
within, e.g., the inclusion check itself, for instance in the context of
Ramsey-based algorithms~\cite{abdulla2010simulation,abdulla2011advanced}.
Ramsey-based complementation algorithms~\cite{sistla1987complementation} in the
worst case produce $2^{\bigOof{n^2}}$ states, which is a~significant gap from
the lower bound of Michel~\cite{michel1988complementation} and
Yan~\cite{yan2006lower}.
The Ramsey-based construction was, however, later improved by Breuers
et al.~\cite{breuers-improved-ramsey} to
match the upper bound $2^{\bigO(n \log n)}$.
The way simulations are applied in the Ramsey-based approach is fundamentally
different from the current work, which is based on rank-based construction.
Taking universality checking as an example, the algorithm checks if the language
of the complement automaton is empty.
They run the complementation algorithm and the emptiness check together, on the
fly, and
during the construction check if a~macrostate with a~larger language has been
produced before; if yes, then they can stop the search from the language-smaller
macrostate.
Note that, in contrast to our approach, their algorithm does not produce the
complement automaton.

%
%

\vspace{-0.0mm}
\section{Conclusion and Future Work}\label{sec:conclusion}
\vspace{-0.0mm}

We developed two novel optimizations of the rank-based complementation algorithm for
B\"{u}chi automata that are based on leveraging direct and delayed simulation
relations to reduce the number of states of the complemented automaton.
The optimizations are directly usable in rank-based BA inclusion and
universality checking.
We conjecture that the decision problem of checking BA language inclusion might
also bring another opportunities for exploiting simulation, such as in a~similar
manner as in~\cite{AbdullaCHMV10}.
Another, orthogonal, directions of future work are (i)~applying simulation in
other than the rank-based approach (in addition to the particular use
within~\cite{abdulla2010simulation,abdulla2011advanced}), e.g., complementation
based on Safra's construction~\cite{safra1988complexity}, which, according to
our experience, often
produces smaller complements than the rank-based procedure, (ii) applying our
ideas within determinization constructions for BAs, and (iii)~generalizing
our techniques for richer simulations, such as the multi-pebble
simulation~\cite{etessami2002hierarchy} or various look-ahead
simulations~\cite{MayrC13,mayr2019efficient}.
Since the richer simulations are usually harder to compute, it would be
interesting to find the sweet spot between the overhead of simulation
computation and the achieved state space reduction.

\subsubsection*{Acknowledgement}

We thank the anonymous reviewers for their helpful comments on how to improve
the exposition in this paper.
This work was supported by
the Ministry of Science and Technology of Taiwan project 106-2221-E-001-009-MY3
the Czech Science Foundation project 19-24397S,
the FIT BUT internal project FIT-S-17-4014,
and The Ministry of Education, Youth and Sports from the
National Programme of Sustainability (NPU~II) project IT4Innovations
excellence in science---LQ1602.

\bibliographystyle{splncs04}
\bibliography{bibliography}
%

\end{document}